\documentclass[journal]{IEEEtran}

\usepackage{inputenc}
\usepackage{mathtools}
\usepackage{color}
\usepackage{array}
\usepackage{verbatim}
\usepackage{float}
\usepackage{amsmath}
\usepackage{amsthm}
\usepackage{amssymb}
\usepackage{graphicx}
\usepackage{longtable}
\usepackage{multirow}
\usepackage{balance}
\usepackage{mathtools}

\newcommand\myeq{\stackrel{\mathclap{\normalfont\mbox{def}}}{=}}

\usepackage[unicode=true,
bookmarks=false,
breaklinks=false,pdfborder={0 0 1},colorlinks=false]
{hyperref}
\hypersetup{
	colorlinks,bookmarksopen,bookmarksnumbered,citecolor=red,urlcolor=red}
\usepackage{cite}

\usepackage{indentfirst}
\usepackage{textcomp}
\usepackage[T1]{fontenc}
\usepackage{amsthm}
\usepackage{amsmath,amstext,amsfonts}
\usepackage{amssymb}
\usepackage{mathtools}
\usepackage[cmintegrals]{newtxmath}
\usepackage{bm}  % bold math anywhere in math mode and of greek letters too
\usepackage{siunitx}
\sisetup{per-mode=symbol, per-symbol=/} 
\usepackage{physics}
\DeclareMathOperator*{\argmin}{arg\,min}

\usepackage{graphicx}
\graphicspath{%
	{Figures/}
	{Figures/TiKz/}
	{Figures/Ipe/}
	{Figures/XFig/}
	{Figures/Inkscape/}
	{Figures/Matlab/}
	{Figures/Scanned/}
	{Figures/Saved/}
}
\usepackage{tikz}
\usepackage[list=true,labelformat=simple]{subcaption}
%\captionsetup[sub]{subrefformat=parens, skip=2pt}
\captionsetup[sub]{skip=2pt}
 % for reference like Fig. 2(b)

\usepackage{cite}
\usepackage{url}
\usepackage{xcolor}
\usepackage{siunitx}

\usepackage[inline]{enumitem}

\usepackage{algorithm,algpseudocode,float}
\usepackage{setspace} % Needed to fix line spacing inside the algorithms

\algnewcommand{\algorithmicand}{\textbf{ and }}
\algnewcommand{\algorithmicor}{\textbf{ or }}
\algnewcommand{\AlgAnd}{\algorithmicand}
\algnewcommand{\AlgOr}{\algorithmicor}

%\pdfminorversion=4
%\input{wail-custom}

\usepackage[noabbrev]{cleveref}  % automatically sort references
\Crefname{figure}{Fig.}{Figs.}

\usepackage{tabularx,booktabs}
\newcolumntype{C}{>{\centering\arraybackslash}X} % centered version of "X" type
\setlength{\extrarowheight}{1pt}
\usepackage{lipsum}

\makeatletter
%%%%%%%%%%%%%%%%%%%%%%%%%%%%%% Textclass specific LaTeX commands.
% protect \markboth against an old bug reintroduced in babel >= 3.8g
\let\oldforeign@language\foreign@language
\DeclareRobustCommand{\foreign@language}[1]{%
	\lowercase{\oldforeign@language{#1}}}

\floatstyle{ruled}
\newfloat{algorithm}{tbp}{loa}
\providecommand{\algorithmname}{Algorithm}
\floatname{algorithm}{\protect\algorithmname}

% protect \markboth against an old bug reintroduced in babel >= 3.8g
\let\oldforeign@language\foreign@language
\DeclareRobustCommand{\foreign@language}[1]{%
	\lowercase{\oldforeign@language{#1}}}

\ifCLASSINFOpdf
\else
\fi

% correct bad hyphenation here
\hyphenation{op-tical net-works semi-conduc-tor}

\newtheorem{defn}{Definition}

\newtheorem{lem}{Lemma}

\newtheorem{thm}{Theorem}
\newtheorem{rem}{Remark}

\newtheorem{assum}{Assumption}
\pagestyle{empty}
\thispagestyle{empty}

\begin{document}
	
	% paper title
	% Titles are generally capitalized except for words such as a, an, and, as,
	% at, but, by, for, in, nor, of, on, or, the, to and up, which are usually
	% not capitalized unless they are the first or last word of the title.
	% Linebreaks \\ can be used within to get better formatting as desired.
	% Do not put math or special symbols in the title.
	
	\title{An Online Model-Following Projection Mechanism Using Reinforcement Learning}

	% % % To be REMOVED
	\author{Mohammed I. Abouheaf$^1$, Hashim A. Hashim$^2$, Mohammad A. Mayyas$^1$, and Kyriakos G. Vamvoudakis$^3$% <-this % stops a space
		%\thanks{This work was supported in part by Thompson Rivers University Internal research fund, RGS-2020/21 IRF, \# 102315.}
		\thanks{$^1$M. I. Abouheaf and M. A. Mayyas are with the College of Technology, Architecture \& Applied Engineering, Bowling Green State University, Bowling Green, OH, 43403, USA, email: \{mabouhe,mmayyas\}@bgsu.edu.
			
			$^2$H. A. Hashim is with the Department of Mechanical and Aerospace Engineering, Carleton University, Ottawa, Ontario, K1S-5B6, Canada, e-mail: Hashim.Mohamed@carleton.ca.
			
			$^3$K. G. Vamvoudakis is with the Daniel Guggenheim 	School of Aerospace Engineering, Georgia Institute of Technology, Atlanta, GA, 30332, USA, e-mail: kyriakos@gatech.edu.}
		\thanks{
			This work was supported in part by the National Sciences and Engineering
			Research Council of Canada (NSERC), under the grants RGPIN-$2022-04937$ and by the National Science Foundation (NSF) under grant Nos. S\&AS-$1849264$, CPS-$1851588$, CPS-$2227185$, and CPS-$2038589$. 
	}}

	%\markboth{IEEE CONTROL SYSTEMS LETTERS, \today}{Hashim \MakeLowercase{\textit{et al.}}: Guaranteed Performance Nonlinear Estimator for SLAM}

	\maketitle
	\pagestyle{empty}
	\thispagestyle{empty}

	\begin{abstract}
		%The model-following adaptive systems rely on such concepts as sliding surfaces, backstepping, Lyapunov theory, and model-predictive framework to identify the underlying control strategies. Furthermore, the classical optimal tracking problem is traditionally solved backward-in-time using a set of coupled difference equations. The above approaches require partial or complete knowledge of the process and the model-to-follow dynamics. This urged for online model-free adaptive learning solutions that do not employ any explicit dynamical information to decide the control strategy to follow in real-time. Moreover, to design a simple data-driven mechanism the adaptive control policy of the process and of the reference model must be decoupled. 
		%
		In this paper, we propose a model-free adaptive learning solution for a model-following control problem. This approach employs policy iteration, to find an optimal adaptive control solution. It utilizes a moving finite-horizon of model-following error measurements.
		In addition, the control strategy is designed by using a projection mechanism that employs Lagrange dynamics. It allows for real-time tuning of  derived actor-critic structures to find the optimal model-following strategy and sustain optimized adaptation performance. Finally, the efficacy of the proposed framework is emphasized through a comparison with sliding mode and high-order model-free adaptive control approaches. 
	\end{abstract}
	
	% Note that keywords are not normally used for peerreview papers.
	\IEEEpeerreviewmaketitle{}

	\section{Introduction}
	Model Reference Adaptive Systems (MRASs) are utilized in many applications such as the actuation of manipulators, guidance of unmanned vehicles, and motion planning~\cite{Chen2021,MPC2021,Liu2018,Hol2002,Kam1,Byrne1995}. The optimal tracking control solutions are mostly implemented offline and  require partial or complete knowledge of the physical models as well as the desired reference trajectories~\cite{Lewis2012,aastrom2013adaptive,DMMPC2021,Hol2002,Kam1,Chen2021,MPC2014,MPC2021,AUV1,Robust2019,Byrne1995,Moore2014,Liu2018,SHI2017,netw1}.  
	%
	%Development of control strategies for MRAS solutions is challenged by the need for reference-model and system dynamics information. 
	%
	%Many adaptive approaches such as} dual mode predictive control~\cite{DMMPC2021}, gain scheduling~\cite{Kam1}, adaptive backstepping and $\mathcal{L}_1$ adaptive control~\cite{Chen2021}, model predictive control~\cite{MPC2014,MPC2021}, sliding mode surfaces~\cite{AUV1,Robust2019}, Lyapunov theory-based MRAS~\cite{Byrne1995,Moore2014}, barrier function-based MRAS~\cite{Liu2018}, means of linear matrix inequalities~\cite{SHI2017}, and feedforward control}~\cite{netw1}.
	%
	For nonlinear systems with internal passivity, the model-reference tracking problem is solved using sliding mode surfaces along with a velocity observer~\cite{Robust2019}. Nonetheless, the control strategy partially relied on the process dynamics, where a zero-state detectability condition is considered to guarantee asymptotic stability of the equilibrium point.  For underactuated nonlinear systems of moderate order, \cite{Moore2014} proposed an MRAS solution that adopted the concept of the sum-of-squares polynomial optimization. The derived strategy is partially reliant on the process dynamics. The same is true for the solution presented in~\cite{RMPC2021}, where a robust model predictive control approach is considered forthe reference batch processes.
	MRAS solutions based on graphical games have been developed for multi-agent systems~\cite{AbouheafCTT2015,AbouheafICRA19,AbouheafAuto14}. These solutions are designed for linear time-invariant systems and require partial knowledge of each agent's dynamics to derive local strategies. As such, these approaches do not address the nonlinearity of the agents.  
	
	Reinforcement Learning (RL) mechanisms have not been fully investigated to develop model-following adaptive control strategies~\cite{MFAC2021}. RL is concerned with guiding the agent towards the best strategies after interactions with the environment to maximize (minimize) a cumulative reward (cost) ~\cite{sut92,Sutton_1998,Bertsekas1996}. RL solutions can be found using several techniques including two-step mechanisms such as policy iteration (PI).
	%numerous approaches to solve the underlying temporal difference structures, such as
	%
	PI solution evaluates and improves a given strategy in an iterative manner~\cite{Bertsekas1996,AbouheafACC19,Busoniu2010}. The evaluation of the policy can be done using approaches such as  
	least squares (LS) and recursive LS (RLS)~\cite{Busoniu2010, Srivastava2019}. 
	An off-policy RL approach is considered to solve the Algebraic Riccati Equation (ARE) in~\cite{Bah2017}. Another PI mechanism is adopted to solvethe output-based regulation of a cooperative control problem in~\cite{Lewis20}. On the other hand, approximation tools such as the means of adaptive critics are adopted to implement the RL solutions. The adaptive critic is a device that learns to anticipate reinforcing events in a way that makes it a useful conjunct to another component, the actor, that adjusts behavior to maximize the frequency and/or magnitude of reinforcing events~\cite{Sutton2008,CrtBahare,CrtLewis,CrtZhao}. Gradient approaches are used to tune the actor and critic weights. The adaptation schemes can vary depending on the desired function approximation structure and the underlying solutions employ supporting conditions such as in~\cite{MFAC2021}, where a pre-designed strategy that requires resetting conditions is considered. In~\cite{Bahare14}, the tracking problem is solved for linear-time invariant systems, where it employed a Q-learning method for an overall augmented system. 
	Another approximate model-free approach based on adaptive critics is adopted in~\cite{AbouheafTrans20} to control a flexible wing aircraft. Nonetheless, the feedback strategy relied on a non-optimal guidance vector embedded within that strategy. 
	%This approach did not guarantee an optimal solution due to lack of dynamic models for such type of aircraft. 
	%

	This discussion about the challenges associated with several MRAS solutions that exist in the literature motivates us to develop a model-following strategy with the following properties: 1) ease of implementation in a digital environment such as microprocessors, 2) ability to utilize measurements of the process without incorporating any explicit dynamical information in the underlying strategy, 3) capability to solve model-following problems with high-order error dynamics using feasible adaptive strategies, and 4) enabling simultaneous multi-objective optimization of the model-following and strategy adaptation performances. 
	
	\paragraph*{Contributions} The contributions of the work are threefold. First it formulates a novel model-following adaptive learning solution that  requires only the real-time measurements of the process as inputs to the control strategy. Second, the framework is flexible with respect to the order of the model-following error dynamics, and finally uses a novel projection mechanism based on Lagrange dynamics to adapt the gains of the control strategy.
	
	\textit{Mathematical notation:} The following notation and definitions are adopted by the mathematical setup of the adaptive learning solution. $\mathbb{R}, \, \mathbb{N}, \text{and } \mathbb{Z}_{0}^{+}$ refer to the sets of real numbers, positive whole numbers, and non-negative integers, respectively. $\norm{.}_1$ is the 1-norm of a vector. $\grad{g}$ is a gradient of $g$. {\tiny $\bigotimes$} signifies a Kronecker product. Let the $\ell_\infty-$ norm of a sequence $\{ {\varsigma(e)} \}_{e=0}^{\infty}$ be given by $ \lVert\varsigma \rVert_{\infty} = \sup\limits_{e \in \mathbb{N}} \lVert \varsigma(e) \rVert_\infty$ with $\ell_\infty \myeq \{\varsigma: \lVert\varsigma \rVert_{\infty} < \infty\}$ and $\ell_2 \myeq \{\varsigma: \lVert\varsigma \rVert_{2} < \infty\}$.
	
	\paragraph*{Structure} The remainder of the paper is organized as follows. Section~\ref{sec:Preliminaries} introduces the mathematical setup of the model-following control problem. Moreover, the duality between the Hamilton-Jacobi-Bellman (HJB) and Bellman optimality equations is explained. This is needed to develop a temporal difference mechanism and to derive an optimal model-following strategy. Then, a temporal difference solution based on PI is introduced in Section~\ref{sec:RL}. Further, the convergence conditions of the PI solution are discussed. Section~\ref{sec:crit} presents the actor-critic approximation mechanism of the model-free RL solution. This is done using a projection approach that is based on Lagrange dynamics to guarantee convergence of the adapted actor-critic weights. The solution is validated using nonlinear and linear systems with state and input delays in Section~\ref{sec:Sim}. Finally, Section~\ref{sec:conclus} concludes the work.

	\section{Problem Formulation\label{sec:Preliminaries}}
	This section lays out the mathematical foundation of the adaptive solution using optimal control theory ~\cite{Lewis2012}. %The resulting online learning solution relies on a data-driven technique to solve the optimal control structure. 
	Consider a discrete-time nonlinear system described by
	\begin{equation}
		{\bf X}_{k+1}={f_k({\bf X}_k,{\bf \, u}_k)} \text{ and } {\bf y}_k= g_k\left({\bf X}_k\right),\ k\in \mathbb{N}
		\label{eq:dyn}
	\end{equation}
	where ${\bf X}_k \in \mathbb{R}^{n},$ ${\bf \, u}_k \in \mathbb{R}^{t},$ and ${\bf y}_k \in \mathbb{R}^{p}$ represent the state vector, control vector, and output vector, respectively. Consider that the model-to-follow dynamics are given $\forall k\in\mathbb{N}$ by
	\begin{equation}
		{\bf X}^m_{k+1}={f^m_k({\bf X}^m_k,{\bf \, u}^m_k)} \text{ and }
		{\bf y}^m_k= g^m_k\left({\bf X}_k\right)
		\label{eq:mod_dyn}
	\end{equation}
	with ${\bf X}^m_k \in \mathbb{R}^{v},$ ${\bf \, u}^m_k \in \mathbb{R}^{m},$ and ${\bf y}^m_k \in \mathbb{R}^{p}$ denoting the state vector, control vector, and output vector, respectively.
	%Our problem can be formally stated as follows. 
	The problem can be considered as an optimal regulation of the difference between the output of the process and that of the reference model as shown in Fig.~\ref{fig:fig}, i.e., $\lim\limits_{k \rightarrow \infty}\lVert{\bf \varepsilon}_k\rVert\rightarrow \bf 0,$ ${\bf\varepsilon}_k:={\bf y}^m_k-{\bf y}_k$.
	
	The control signal is given by ${\bf u}_{k+1}={\bf u}_{k}+{\bf \mu}^{\bf \omega}_{k},$ where ${\bf \mu}^{\bf \omega}_k$ is a correction control signal that is decided using a real-time adaptive strategy ${\bf \omega}\in \mathbb{R}^{t \times (\left(r+1\right)\times p)}$. The signal ${\bf \mu}^{\bf \omega}_k={\bf \omega} \, \mathcal{E}_k$ employs a flexible-size error vector defined by $\mathcal{E}_k=[\begin{array}{ccccc}
		{\bf\varepsilon}^\textrm{T}_k & {\bf\varepsilon}^\textrm{T}_{k-1} & {\bf\varepsilon}^\textrm{T}_{k-2} & \dots&{\bf\varepsilon}^\textrm{T}_{k-r}\end{array}]^{\textrm{T}}\in\mathbb{R}^{\left(r+1\right)\times p}$. The number of employed error samples reflects the order of the model-following error dynamics.
	%without explicitly incorporating the dynamical information of the process, model-to-follow, and the model-following error. 
	Thus, the ultimate goals are to design a framework to avoid the solutions of a set of coupled difference equations backwards in time as well as to develop computationally efficient strategies.

	%Furthermore, this strategy will be designed in a such a way that it will not rely on neither the dynamics of the process nor those of the reference model.
	%
	%Hence, an adaptive learning mechanism is needed to derive the output of the process to that of the reference model. 
	
	%
	%The solution uses an online mechanism that regulates the model-following error (i.e., ${\bf\varepsilon}_k,\forall k\in\mathbb{N}$) without solving the classical optimal tracking problem. 
	%Herein, the optimal control problem is structured then the adaptive control solution is accommodated based on the optimal control one.  
	\begin{figure}[h]
		\centering
		\includegraphics[scale=0.45]{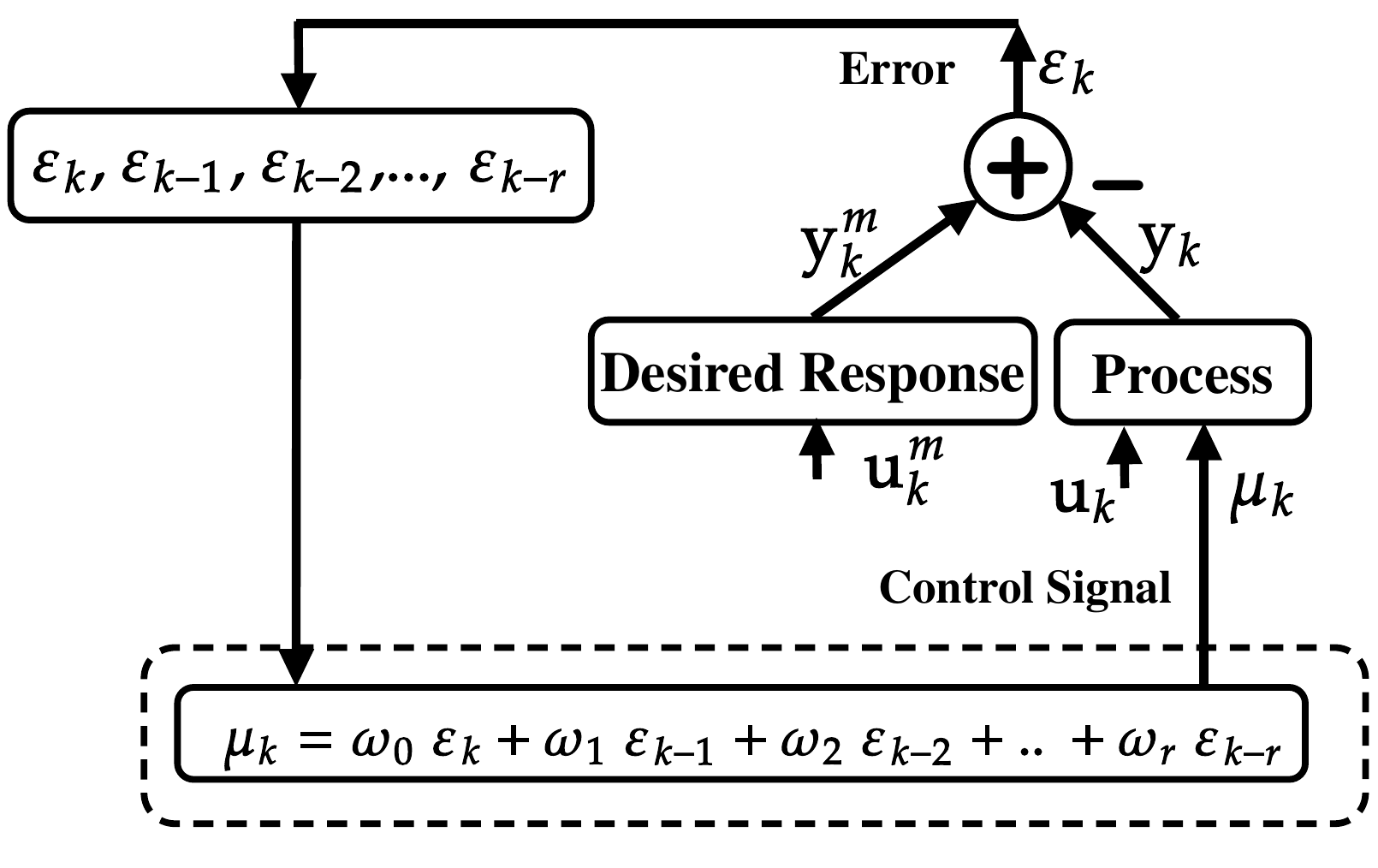}
		\caption{Model-reference adaptive system.}
		\label{fig:fig}
	\end{figure}
	
	%The goal is to derive the output of the process to follow the desired model without incorporating the system dynamics information in the control law where the control input signal of the model-to-follow is independent from that of the process under consideration.
	%
	
	A user-defined cost functional will be used to measure the quality of a strategy ${\bf \omega}$ such that 
		$\mathcal{U}_k\, \left({\bf \mathcal{E}}_k,{\bf \mu}^{\bf \omega}_k\right)=\frac{1}{2}\left({\bf \mathcal{E}}_k^\textrm{T} \, {\bf \mathcal Q} \, {\bf \mathcal{E}}_k+ {{\bf \mu}^{\bf \omega}_k}^\textrm{T} \, {\bf \mathcal  R}\,  {\bf \mu}^{\bf \omega}_k \right),$ where $  {\bf \mathcal Q}\succeq 0 \in\mathbb{R}^{((r+1) p) \times ((r + 1) p)}$ and $  {\bf \mathcal R} \succ 0  \in\mathbb{R}^{t \times t}$ are symmetric weighting matrices. The cost function is quadratic and convex in the error vector ${\bf \mathcal{E}}_k$ and control strategy ${\bf \mu}^{\bf \omega}_k$. The overall performance of a control strategy ${\bf \omega}$ is evaluated using a performance index given by
	\begin{equation}
		J_k^{\bf \omega}=\sum_{i=k}^{\infty}\mathcal{U}_i \, \left(\mathcal{E}_i,{\mu}^{\bf \omega}_i\right).
		\label{eq:index}
	\end{equation}
	\par	The optimal strategy ${\bf \omega}^o$ is decided by solving the HJB equation of the model-following control problem. The structure of the convex cost function motivates the form of the control strategy to be linear in the error vector ${\bf \mathcal{E}}_k$  i.e., the model-following error dynamics. The Hamiltonian is then given $\forall \mathcal{E}_k, {\bf \lambda}_{k+1}, {\bf \mu}_k$ by
	\begin{equation}
		H(\mathcal{E}_k, {\bf \lambda}_{k+1}, {\bf \mu}_k)= {\bf \lambda}^\textrm{T}_{k+1} {f^{\varepsilon}_k(\mathcal{E}_k,{\bf \mu}_k)}+\mathcal{U}_k \, \left(\mathcal{E}_k,{\mu}_k\right),
		\label{eq:ham}
	\end{equation}
	where ${f^{\varepsilon}_k(\mathcal{E}_k,{\bf \mu}_k)}$ is a constraint that is dictated by the model-following error dynamics.
	% over a moving fixed-length horizon.
	Thus, the optimization problem shall find the optimal strategy ${\bf \omega}^o$ while satisfying the constraint 
	${f^{\varepsilon}_k(\mathcal{E}_k,{\bf \mu}_k)}={\bf Z}_{k+1},$
	where $\small {\bf Z}_{k+1}= \left[
	\begin{array}{cc}
		\mathcal{E}^\textrm{T}_{k+1}
		&
		{\bf \mu}^\textrm{T}_{k+1}
	\end{array}
	\right]^\textrm{T}$ or equivalently $\small {\bf Z}_{k+1}= \left[
	\begin{array}{cc}
		\mathcal{E}^\textrm{T}_{k+1}
		&
		{\bf \mathcal{E}}^\textrm{T}_{k+1} {\bf \omega}^\textrm{T}
	\end{array}
	\right]^\textrm{T}$.

	\begin{rem}
		\label{rem:rmk1}
		This development enables a flexible strategy that is scalable in terms of the number ofthe model-following error measurements. Hence, the desired strategy takes the form of a digital PID controller but with adaptable gains in real-time. This strategy mimics to some extent a gain scheduler but with real-time adaptation capabilities. Furthermore, this is useful in the case of optimizing numerous coupled model-following loops acting simultaneously.\hfill  $\square$
	\end{rem}

	\begin{assum}
		\label{Asm:1}
		Given the desired response of the model-to-follow dictated by \eqref{eq:mod_dyn}, assume that \eqref{eq:dyn} is stabilizable around the reference-trajectory ${\bf y}^m_k, \forall k\in \mathbb{N}$.  \hfill  $\square$
	\end{assum}
	\begin{assum}
		\label{Asm:2}
		There exists a strategy ${\bf \omega}$ such that the control signals ${\bf \mu}_k={\bf \omega}\,\mathcal{E}_k, \forall k\in\mathbb{N}$  stabilize the system \eqref{eq:dyn} around the reference-trajectory \eqref{eq:mod_dyn}.\hfill  $\square$
	\end{assum}

	The following result explains the duality between the Hamiltonian and the temporal difference (TD) or Bellman equation using the discrete-time Hamilton-Jacobi (DTHJ) theory. This result is necessary to develop our adaptive solution.	
	\begin{thm}
		\label{thm:Theorem1}
		Let the value function $V({\bf Z}_{k})>0$ be quadratic and convex in the regulation error vector $\mathcal{E}_k$ with $V({\bf 0})=0$. Then,
		\begin{enumerate}
			\item[a.] The value function $V({\bf Z}_{k})$ satisfies the DTHJ equation given by
			\begin{equation}
				H(\mathcal{E}_k, \displaystyle \frac{\partial V({\bf Z}_{k+1})}{\partial {\bf Z}_{k+1}}, {\bf \mu}_k)=0,\  k\in \mathbb{N}.
				\label{eq:HJ}
			\end{equation}
			\item[b.] The value function $V({\bf Z}_{k})$ represents a Lyapunov function. 
		\end{enumerate}
		
	\end{thm}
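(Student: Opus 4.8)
The plan is to derive everything from the cost-to-go form of the value function and the temporal-difference equation it induces, since that recursion is precisely the ``duality'' the theorem targets. First I would write the value function as the tail of the performance index, $V(\mathbf Z_k)=\sum_{i=k}^{\infty}\mathcal U_i(\mathcal E_i,\mu_i)=J_k^{\omega}$ from \eqref{eq:index}, the series being finite because we evaluate an admissible (stabilising) strategy in the sense of Assumption~\ref{Asm:2}. Peeling off the current stage then yields the Bellman/TD equation $V(\mathbf Z_k)=\mathcal U_k(\mathcal E_k,\mu_k)+V(\mathbf Z_{k+1})$, and this single identity is what I will convert into the DTHJ form for part (a) and difference for part (b).

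For part (a), the aim is to show that the DTHJ equation \eqref{eq:HJ} is a restatement of the TD/Bellman equation for the strategy $\omega$. I would adjoin the error-dynamics constraint $f^{\varepsilon}_k(\mathcal E_k,\mu_k)=\mathbf Z_{k+1}$ to the value recursion through a multiplier (costate) $\lambda_{k+1}$, and use the stationarity of the augmented functional in $\mathbf Z_{k+1}$ to fix $\lambda_{k+1}=\partial V(\mathbf Z_{k+1})/\partial\mathbf Z_{k+1}$, which is exactly the argument inserted into the Hamiltonian in \eqref{eq:HJ}. Substituting this costate into \eqref{eq:ham} and eliminating $V(\mathbf Z_{k+1})$ with the Bellman equation, I would then verify that $H$ collapses to zero along the closed-loop trajectory generated by $\mu_k=\omega\,\mathcal E_k$. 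I expect this collapse---reconciling the adjoint term $\lambda_{k+1}^{\mathrm T}f^{\varepsilon}_k$ with the value increment demanded by Bellman---to be the main obstacle, because the discrete-time costate relation must be derived directly from the quadratic, convex structure of $V$ (with $V(\mathbf 0)=0$) rather than transplanted from the continuous-time HJB.

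For part (b), I would use the difference form of the same Bellman equation, $\Delta V_k:=V(\mathbf Z_{k+1})-V(\mathbf Z_k)=-\mathcal U_k(\mathcal E_k,\mu_k)$. The hypotheses give $V(\mathbf Z_k)>0$ for $\mathbf Z_k\neq\mathbf 0$ and $V(\mathbf 0)=0$, so $V$ is positive definite, while $\mathcal Q\succeq 0$ and $\mathcal R\succ 0$ force $\mathcal U_k\ge 0$, hence $\Delta V_k\le 0$ and $V$ is nonincreasing along trajectories. Positive definiteness together with a nonincreasing $V$ establishes that $V$ is a Lyapunov function; invoking the stabilisability in Assumptions~\ref{Asm:1}--\ref{Asm:2} (and, in the case where $\mathcal U_k$ is only semidefinite in $\mathcal E_k$, a LaSalle/observability argument) upgrades this to the asymptotic regulation $\lVert\mathcal E_k\rVert\to\mathbf 0$. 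This part is routine once the Bellman equation from the proof of (a) is in hand.
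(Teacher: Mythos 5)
Your proposal follows essentially the same route as the paper's own proof: both express $V$ as the cost-to-go with the error-dynamics constraint adjoined via the multiplier $\lambda_{k+1}$, use stationarity with respect to ${\bf Z}_{k+1}$ to obtain the costate identity $\lambda_{k+1}=\partial V({\bf Z}_{k+1})/\partial {\bf Z}_{k+1}$ and hence the DTHJ equation, and then read off part (b) from the Bellman difference $V({\bf Z}_{k+1})-V({\bf Z}_{k})=-\mathcal{U}_k\le 0$ together with positive definiteness of $V$. Your closing remarks in (b) about LaSalle/observability and asymptotic regulation go beyond what the theorem claims (that is the content of Lemma~\ref{Lm:Lem1}), but they do not detract from the argument.
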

	
	\begin{proof}\hspace{200pt}\newline
		\indent \textit{a.} The value function $V({\bf Z}_{k})=\frac{1}{2}\left( {\bf Z}^\textrm{T}_{k} \, {\bf S } \,  {\bf Z}_{k}\right)$ is convex in the error vector $\mathcal{E}_k$ and thus it serves as a Lyapunov candidate where ${\bf S}={\bf S}^\textrm{T}\succ 0 \in\mathbb{R}^{((r + 1) p+t)  ((r+ 1) p+t)}$ and it has a matrix-block structure defined by $\left[
		\begin{array}{cc}
			{\bf S}_{\mathcal{E}\mathcal{E}} & {\bf S}_{\mathcal{E}{\bf \mu}} \\
			{\bf S}_{{\bf \mu} \mathcal{E}} & {\bf S}_{{\bf \mu}{\bf \mu}}
		\end{array}
		\right]$.
		
		The value function can be expressed in terms of the performance index $J_k$ such that $
		V({\bf Z}_{k}) \, \triangleq \, J_k=\sum_{i=k}^{\infty}\mathcal{U}_i\left(\mathcal{E}_i,{\mu}_i\right)
		$
		\noindent or equivalently given by
		\begin{equation}
			V({\bf Z}_{k})=\sum_{i=k}^{\infty}\mathcal{U}_i+{\bf \lambda}^\textrm{T}_{i+1} \left({\bf Z}_{i+1}-{f^{\varepsilon}_i(\mathcal{E}_i,{\bf \mu}_i)}\right).
			\label{eq:Bell}
		\end{equation}
		The Hamiltonian \eqref{eq:ham} and the value function \eqref{eq:Bell} yield  
		\begin{equation}
			{\bf \lambda}^\textrm{T}_{k+1} {\bf Z}_{k+1}+V({\bf Z}_{k})-V({\bf Z}_{k+1})-H(\mathcal{E}_k, {\bf \lambda}_{k+1}, {\bf \mu}_k)=0.
			\label{eq:Ham1}
		\end{equation}
		Taking the gradient of \eqref{eq:Ham1} with respect to ${\bf Z}_{k+1}$ yields
		\begin{align*}
			\frac{\partial{\bf \lambda}_{k+1}}{\partial {\bf Z}_{k+1}}^\textrm{T}
			\left({\bf Z}_{k+1}-\frac{\partial H(\mathcal{E}_k, {\bf \lambda}_{k+1}, {\bf \mu}_k)}{\partial {\bf \lambda}_{k+1}}\right)+{\bf \lambda}_{k+1}-\frac{\partial V({\bf Z}_{k+1})}{\partial {\bf Z}_{k+1}}=0,
		\end{align*}
		i.e., ${\bf \lambda}_{k+1}=\frac{\partial V({\bf Z}_{k+1})}{\partial {\bf Z}_{k+1}}$. Therefore, the value function satisfies \eqref{eq:HJ}.
		\newline
		\indent \textit{b.} Since, $V({\bf Z}_{k+1})-V({\bf Z}_{k})=-{\cal U}_k \, \left(\mathcal{E}_k,{\mu}_k\right) \le {\bf 0}$, then the candidate structure $V({\bf Z}_{k})$ is a Lyapunov function. 
	\end{proof}
	The optimal solution is found by solving the HJB equation 
	\begin{equation}
		H(\mathcal{E}_k, {\bf \nabla} V^o_{k+1}, {\bf \mu}^o_k)=0,
		\label{eq:HJB}
	\end{equation}
	where ${\bf \mu}^o_k$ refers to the optimal correction control signal (i.e., ${\bf \mu}^o_k= \argmin\limits_{{\bf \mu}_k} H(\cdot)={\bf \omega}^o\,\mathcal{E}_k$), ${\bf \omega}^o$ is the optimal strategy, $V^o$ is the optimal value function, and $\displaystyle {\bf \nabla} V^o_{k+1}={\partial V^o({\bf Z}_{k+1})}/{\partial {\bf Z}_{k+1}}$. The relation between Lagrange multiplier ${\bf \lambda}_{k+1}$ and the gradient $\displaystyle {\bf \nabla} V_{k+1}$ yields the following Bellman equation
	\begin{equation}
		V^o({\bf Z}_{k})={\cal U}_k \, \left(\mathcal{E}_k,{\mu}^o_k\right)+V^o({\bf Z}_{k+1}),
		\label{eq:Bello}
	\end{equation}
	where the optimal strategy is calculated such that
	${\bf \mu}^o_k= \argmin\limits_{{\bf \mu}_k} H\left(\mathcal{E}_k, {\bf \nabla} V_{k+1}, {\bf \mu}_k\right)= \argmin\limits_{{\bf \mu}_k} V({\bf Z}_{k})$. This strategy is found by applying Bellman optimality principles such that
	\begin{equation}
		{\bf \mu}^o_k=\argmin\limits_{{\bf \mu}_k} V({\bf Z}_{k})=-{\bf S}_{{\bf \mu} {\bf \mu}}^{-1}{\bf S}_{{\bf \mu} \mathcal{E}}  \, \mathcal{E}_k,\ k\in \mathbb{N}.
		\label{eq:OpPol}
	\end{equation}
	The simultaneous solution of \eqref{eq:Bello} and \eqref{eq:OpPol} provides a solution for the underlying Approximate Dynamic Programming (ADP) problem, namely the Action Dependent Heuristic Dynamic Programming (ADHDP). 
	The following definition is needed.
	\begin{defn}(Admissible Strategy~\cite{kyr1,morad2}) Let the set of all measurable maps ${\bf u}_k(.)\,:\, [a,b] \, \rightarrow U$ represent the space of admissible strategies where $U$ is a compact set. A control strategy $\omega$ is said to be admissible on $U$ if $V({\bf Z}_{0})$ is finite (i.e., $\lim\limits_{k \rightarrow \infty}\lVert{\bf \varepsilon}_k\rVert\rightarrow \bf \ell_2$).\hfill  $\square$
	\end{defn}
	
	The next Lemma shows that, solving the HJB \eqref{eq:HJB} or the Bellman optimality equation \eqref{eq:Bello} yields an asymptotically stable equilibrium point for the model-following error. 
	
	\begin{lem}
		\label{Lm:Lem1}
		Let the value function $V({\bf Z}_{0})$ and ${\bf y}^m_k, \forall k\in \mathbb{N}$ be bounded by $\bar \delta$ and $\bar \rho,$ respectively. Then, the equilibrium point of the model-following error system is asymptotically stable.
	\end{lem}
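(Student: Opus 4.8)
The plan is to leverage the Lyapunov structure already established in Theorem~\ref{thm:Theorem1} and to upgrade the stability-in-the-sense-of-Lyapunov conclusion to asymptotic stability by appending a convergence (attractivity) argument driven by the two boundedness hypotheses. First, I would recall from Theorem~\ref{thm:Theorem1}(b) that $V({\bf Z}_k)=\tfrac{1}{2}{\bf Z}^\textrm{T}_k \, {\bf S} \, {\bf Z}_k$ with ${\bf S}\succ 0$ is positive definite, vanishes only at the origin, and obeys $V({\bf Z}_{k+1})-V({\bf Z}_k)=-\mathcal{U}_k(\mathcal{E}_k,{\bf \mu}_k)\le 0$. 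Since $\mathcal{U}_k\ge 0$, this already delivers stability in the sense of Lyapunov for the error equilibrium $\mathcal{E}={\bf 0}$; the remaining task is the attractivity claim $\mathcal{E}_k\to{\bf 0}$.

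For attractivity I would exploit the bound $V({\bf Z}_0)\le\bar\delta$, which is exactly the admissibility condition and renders the cost-to-go finite. Unrolling the Bellman identity \eqref{eq:Bello} recursively gives $V({\bf Z}_k)=\sum_{i=k}^{\infty}\mathcal{U}_i$, so the full series $\sum_{i=0}^{\infty}\mathcal{U}_i=V({\bf Z}_0)\le\bar\delta<\infty$ converges. As every summand is nonnegative, the tail of a convergent series must vanish, i.e.\ $V({\bf Z}_k)=\sum_{i=k}^{\infty}\mathcal{U}_i\to 0$ as $k\to\infty$. Pairing this with the coercivity estimate $V({\bf Z}_k)\ge\tfrac{1}{2}\lambda_{\min}({\bf S})\lVert{\bf Z}_k\rVert^2$ forces $\lVert{\bf Z}_k\rVert\to 0$, and since $\mathcal{E}_k$ (and within it ${\bf \varepsilon}_k$) is a sub-block of ${\bf Z}_k$, we obtain $\mathcal{E}_k\to{\bf 0}$ and hence ${\bf \varepsilon}_k\to{\bf 0}$.

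The reference bound $\lVert{\bf y}^m_k\rVert\le\bar\rho$ would enter to keep the argument well posed: because ${\bf y}_k={\bf y}^m_k-{\bf \varepsilon}_k$ and ${\bf \varepsilon}_k\to{\bf 0}$, boundedness of the reference guarantees the process output, and under Assumptions~\ref{Asm:1}--\ref{Asm:2} the underlying state and control, remain bounded, so the closed-loop trajectory stays inside the region where the equilibrium $\mathcal{E}={\bf 0}$ is valid and the Lyapunov analysis applies. Combining the Lyapunov stability inherited from Theorem~\ref{thm:Theorem1}(b) with the attractivity $\mathcal{E}_k\to{\bf 0}$ then yields asymptotic stability of the model-following error equilibrium.

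I expect the main obstacle to be the attractivity step rather than stability itself: because ${\bf \mathcal Q}\succeq 0$ is only positive semidefinite, one cannot conclude $\mathcal{E}_k\to{\bf 0}$ directly from $\mathcal{U}_k\to 0$, since the term ${\bf \mathcal{E}}^\textrm{T}_k {\bf \mathcal Q}\, {\bf \mathcal{E}}_k$ may vanish on a nontrivial subspace. The clean way around this is the tail-sum observation above, which drives the \emph{value function} $V({\bf Z}_k)\to 0$ and then invokes ${\bf S}\succ 0$ rather than the stage cost alone, thereby sidestepping any detectability or observability condition on ${\bf \mathcal Q}$.
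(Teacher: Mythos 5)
Your proof is correct within the paper's ADP framework, but it follows a genuinely different route than the paper's own argument. The paper argues through sequence-space memberships: from the monotone bounded chain $0 \le V({\bf Z}_{k+1}) \le V({\bf Z}_{k}) \le \bar\delta$ it deduces $\ell_\infty$ boundedness of the error signals, then writes the Bellman identity $V({\bf Z}_{k})=\frac{1}{2}{\bf \mathcal{E}}_k^\textrm{T}\left({\bf \mathcal Q} + {\bf \omega}^\textrm{T} {\bf \mathcal R}\, {\bf \omega} \right){\bf \mathcal{E}}_k + V({\bf Z}_{k+1})$ and telescopes it to claim ${\bf \varepsilon}_k \in \ell_2$, hence ${\bf \varepsilon}_k \rightarrow 0$. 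You instead drive the \emph{value function itself} to zero: since $V({\bf Z}_{k})$ is by definition the cost-to-go $J_k=\sum_{i=k}^{\infty}\mathcal{U}_i$, finiteness $V({\bf Z}_{0})\le\bar\delta$ makes it the tail of a convergent nonnegative series, so $V({\bf Z}_{k})\rightarrow 0$, and coercivity $V({\bf Z}_{k})\ge\frac{1}{2}\lambda_{\min}({\bf S})\lVert{\bf Z}_{k}\rVert^2$ with ${\bf S}\succ 0$ forces ${\bf Z}_{k}\rightarrow {\bf 0}$. This buys you something real: the paper's step ``summable stage cost $\Rightarrow {\bf \varepsilon}_k \in \ell_2$'' tacitly requires ${\bf \mathcal Q} + {\bf \omega}^\textrm{T} {\bf \mathcal R}\,{\bf \omega}$ to be positive definite on the error subspace (a detectability-type condition), which does not follow from ${\bf \mathcal Q}\succeq 0$, ${\bf \mathcal R}\succ 0$ alone, since ${\bf \omega}^\textrm{T} {\bf \mathcal R}\,{\bf \omega}$ has rank at most $t$; your closing paragraph flags exactly this, and your tail-sum/coercivity argument sidesteps it while also delivering the stronger conclusion ${\bf Z}_k\rightarrow{\bf 0}$. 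Two small cautions: phrasing the identity $V({\bf Z}_{k})=\sum_{i=k}^{\infty}\mathcal{U}_i$ as ``unrolling the Bellman identity'' is mildly circular, because unrolling only yields $V({\bf Z}_{k})=\sum_{i=k}^{N-1}\mathcal{U}_i+V({\bf Z}_{N})$ and the vanishing of $V({\bf Z}_{N})$ is the point at issue --- the clean justification is that the paper \emph{defines} $V$ as $J_k$, so the identity holds by definition; and, like the paper's proof, your argument never truly needs the bound $\bar\rho$ for the error-system claim --- your use of it to keep the process output bounded is a reasonable gloss on an otherwise unused hypothesis.
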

	\begin{proof} The value function $V(\cdot)$ is upper-bounded i.e., $V({\bf Z}_{k})\le \bar \delta$. Hence, the inequality $0 \le \dots \le V({\bf Z}_{k+1})\le V({\bf Z}_{k})\le \bar \delta$ holds for any admissible control strategy ${\bf \omega}$ and thus for the optimal strategy ${\bf \omega}^o$ as well. Hence, $V({\bf Z}_{k})-V({\bf Z}_{k+1}) \in \ell_\infty$ i.e., ${\bf \varepsilon}_k, \forall k \in \ell_\infty$ and ${\bf S} \in \ell_\infty$. Furthermore, the HJB equation \eqref{eq:HJB} results in $\left({\partial{V}_{k+1}}/{\partial {\bf Z}_{k+1}}\right)^\textrm{T} {\bf Z}_{k+1}\in \ell_\infty$ and consequently $\mathcal{E}_{k+1}\in \ell_\infty$.  Adopting an admissible strategy ${\bf \omega}$ or the optimal one yields a Bellman equation denoted by $V({\bf Z}_{k})=\frac{1}{2}{\bf \mathcal{E}}_k^\textrm{T} \, \left({\bf \mathcal Q} + {\bf \omega}^\textrm{T} \, {\bf \mathcal R}\,  {\bf \omega} \right)\, {\bf \mathcal{E}}_k + V({\bf Z}_{k+1})$. Therefore, the inequality $\frac{1}{2}{\bf \mathcal{E}}_k^\textrm{T} \, \left({\bf \mathcal Q} + {\bf \omega}^\textrm{T} \, {\bf \mathcal R}\,  {\bf \omega} \right)\, {\bf \mathcal{E}}_k \, \le V({\bf Z}_{k})$ holds. This implies that, ${\varepsilon_k} \in \ell_2$ and similarly $\left({\partial{\bf \lambda}_{k+1}}/{\partial {\bf Z}_{k+1}}\right)^\textrm{T} {\bf Z}_{k+1}\in \ell_2$. Hence, ${\varepsilon_k} \rightarrow 0$ as $k \rightarrow \infty$ and the resulting model-following error system has an asymptotically stable equilibrium point. 
	\end{proof}

	\section{PI Solution\label{sec:RL}}
	This section introduces a PI solution to solve the model-following control problem. The mechanism solves the Bellman optimality equation \eqref{eq:Bello} using the optimal strategy \eqref{eq:OpPol}. Further,the PI algorithm requires an initial admissible strategy before solving for better stabilizing strategies, as detailed in Algorithm~\ref{Alg:Alg1}. 
	%Algorithm~\ref{Alg:Alg1} shows the PI mechanism of the adaptive learning solution.
	\begin{algorithm}[htb!]
		\caption{\label{Alg:Alg1} Model-Following PI Solution}
		\begin{enumerate}
			\item[{\footnotesize{}1:}] Start with an admissible solution ${\bf S}^0$ and initialize the vector ${\bf X}_0$ (then calculate the error signal ${\bf \varepsilon}_0$ and vector ${\bf Z}_0$).
			\item[{\footnotesize{}2:}] Calculate the admissible control strategy ${\bf \omega}^0$.
			\item[{\footnotesize{}3:}] Evaluate the control strategy (i.e., obtain ${\bf S}^{j+1}$) by solving
			\begin{equation}
				V^{j+1}({\bf Z}^j_{k})-V^{j+1}({\bf Z}^j_{k+1})={\cal U}_k \, \left(\mathcal{E}_k,{\mu}^j_k\right),
				\label{eq:PIVal}
			\end{equation}
			where $j$ refers to a calculation step index.
			\item[{\footnotesize{}4:}] Improve the control strategy using
			\begin{equation}
				{\bf \mu}^{j+1}_k=-\, [{\bf S}_{{\bf \mu} {\bf \mu}}^{-1}{\bf S}_{{\bf \mu} \mathcal{E}}]^{j+1}  \, \mathcal{E}_k.
				\label{eq:PIPol}
			\end{equation}
			\item[{\footnotesize{}5:}] Terminate upon convergence of $\lVert{\bf S}^{j+1}-{\bf S}^{j}\rVert$.
		\end{enumerate}
	\end{algorithm}

	During the initial phase of the PI solution, vector ${\bf X}_0$ stores initial measurements of the dynamic system or process. Further, it requires an initial admissible strategy that can be chosen arbitrarily such that ${\bf S}^0={{\bf S}^0}^\textrm{T}\succ 0$ which is standard for PI \cite{Sutton_1998}.
	\begin{thm}
		\label{thm:Theorem2}
		Let the model-following solution be given by Algorithm~\ref{Alg:Alg1}, where the value function and the associated optimal strategy are given by $V\left({\bf Z}_{k}\right)$ and \eqref{eq:OpPol}, respectively. Then,
		\begin{enumerate}
			\item[a.] The strategies~\eqref{eq:PIPol} are stabilizing.
			\item[b.] The PI process yields a non-increasing positive definite solution sequence that follows $0\le \dots \le V\left(..\right)^o\le \dots \le \dots \le V\left(..\right)^1 \le V\left(..\right)^0$ with $V^o\left(..\right)$ being the optimal value function.
		\end{enumerate}
	\end{thm}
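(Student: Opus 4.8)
The plan is to prove both parts by induction on the calculation-step index $j$, using the Lyapunov characterization of $V$ furnished by Theorem~\ref{thm:Theorem1}(b). For part (a), the base case is immediate: $\omega^0$ is admissible, hence stabilizing, by Step~1 of Algorithm~\ref{Alg:Alg1}. For the inductive step, suppose the current strategy $\omega^j$ is stabilizing and let $V$ denote the positive-definite solution of the evaluation equation \eqref{eq:PIVal} associated with $\omega^j$. The improvement step \eqref{eq:PIPol} selects $\mu^{j+1}_k$ as the minimizer over $\mu_k$ of $\mathcal{U}_k(\mathcal{E}_k,\mu_k)+V(\mathbf{Z}_{k+1})$, where the successor $\mathbf{Z}_{k+1}$ is tied to $\mu_k$ through the error-dynamics constraint $f^{\varepsilon}_k(\mathcal{E}_k,\mu_k)$. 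Minimality gives $\mathcal{U}_k(\mathcal{E}_k,\mu^{j+1}_k)+V(\mathbf{Z}^{+}_{k+1})\le \mathcal{U}_k(\mathcal{E}_k,\mu^{j}_k)+V(\mathbf{Z}_{k+1})=V(\mathbf{Z}_k)$, where $\mathbf{Z}^{+}_{k+1}$ is the state reached under $\mu^{j+1}$. Rearranging, $V(\mathbf{Z}_k)-V(\mathbf{Z}^{+}_{k+1})\ge \mathcal{U}_k(\mathcal{E}_k,\mu^{j+1}_k)>0$ for every $\mathcal{E}_k\neq\mathbf{0}$, since $\mathcal{Q}\succeq 0$ and $\mathcal{R}\succ 0$. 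Thus $V$ is a strictly decreasing Lyapunov function along the closed loop driven by $\mu^{j+1}$, so $\omega^{j+1}$ is stabilizing, completing the induction.

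For part (b), I would first establish monotonicity by telescoping the inequality just derived along the $\mu^{j+1}$-trajectory. Summing $V(\mathbf{Z}_i)-V(\mathbf{Z}_{i+1})\ge \mathcal{U}_i(\mathcal{E}_i,\mu^{j+1}_i)$ from $i=k$ to $\infty$ and using that $\mu^{j+1}$ is stabilizing (so $\mathbf{Z}_i\to\mathbf{0}$ and $V(\mathbf{Z}_i)\to 0$ by $V(\mathbf{0})=0$) yields $V(\mathbf{Z}_k)\ge\sum_{i=k}^{\infty}\mathcal{U}_i(\mathcal{E}_i,\mu^{j+1}_i)$, and the right-hand side is precisely the value of the improved strategy recovered from \eqref{eq:PIVal}. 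Hence the evaluated value of $\omega^{j+1}$ does not exceed that of $\omega^{j}$, giving the non-increasing chain $\cdots\le V^{1}\le V^{0}$. Because each evaluation returns $\mathbf{S}^{j}\succ 0$, every $V^{j}$ is positive definite, so the sequence is bounded below by $0$ and converges pointwise to a limit $V^{\infty}$.

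The remaining and most delicate step is to identify $V^{\infty}$ with the optimal value $V^o$, which both closes the chain as $0\le V^o\le\cdots\le V^{1}\le V^{0}$ and supplies the lower bound. One direction is easy: each $\omega^{j}$ is admissible, so optimality of $V^o$ over all admissible strategies gives $V^o\le V^{j}$ for every $j$, hence $V^o\le V^{\infty}$. For the reverse inequality I would use the termination condition of Step~5: as $\lVert\mathbf{S}^{j+1}-\mathbf{S}^{j}\rVert\to 0$, the evaluation \eqref{eq:PIVal} and improvement \eqref{eq:PIPol} act on the same matrix, so the limiting pair $(V^{\infty},\mu^{\infty})$ with $\mu^{\infty}_k=-\mathbf{S}_{\mu\mu}^{-1}\mathbf{S}_{\mu\mathcal{E}}\,\mathcal{E}_k$ simultaneously satisfies the Bellman optimality equation \eqref{eq:Bello} and the stationarity condition \eqref{eq:OpPol}. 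Since these characterize $V^o$ uniquely among positive-definite quadratics (the minimizer in \eqref{eq:OpPol} is unique because $\mathcal{R}\succ 0$ keeps $\mathbf{S}_{\mu\mu}$ invertible), we conclude $V^{\infty}=V^o$. I expect this fixed-point identification to be the main obstacle: the monotone-convergence argument only guarantees a limit, and the real content is showing that this limit actually solves the optimality equation rather than stalling at a non-optimal stationary point, which is where the quadratic-convex structure and the positive definiteness of $\mathcal{R}$ are essential.
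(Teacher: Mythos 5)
The paper gives you nothing to compare against: its ``proof'' of Theorem~\ref{thm:Theorem2} is a single sentence deferring to \cite{AbouheafCTT2015} and \cite{Derong2014}. What you have written is, in substance, the standard policy-iteration convergence argument contained in those references --- induction on $j$, with the evaluation $V^{j+1}$ of the current strategy serving as a Lyapunov function for the improved strategy (part a), then telescoping the one-step improvement inequality along the improved closed-loop trajectory to obtain $V^{j+2}\le V^{j+1}$, monotone convergence of the bounded non-increasing sequence, and identification of the limit with $V^o$ through the Bellman optimality equation (part b). So your route is the intended one, and it actually supplies the content the paper cites away.

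Two repairs are needed before your write-up is airtight. First, the strict-decrease claim ``$\mathcal{U}_k(\mathcal{E}_k,\mu^{j+1}_k)>0$ for every $\mathcal{E}_k\neq\mathbf{0}$'' does not follow from the paper's standing assumptions: only $\mathcal{Q}\succeq 0$ is required, so $\mathcal{E}_k^{\rm T}\mathcal{Q}\,\mathcal{E}_k$ can vanish at nonzero $\mathcal{E}_k$, and if additionally ${\bf \Omega}^{j+1}\mathcal{E}_k=0$ the stage cost is zero; you then get only non-increase of $V^{j+1}$, not a strict Lyapunov decrease. Asymptotic stability of the improved policy needs either $\mathcal{Q}\succ 0$ or a detectability-type hypothesis (the paper's own Lemma~\ref{Lm:Lem1} quietly skates over the same point). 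Second, in identifying $V^{\infty}$ with $V^o$ you lean on Step~5 of Algorithm~\ref{Alg:Alg1}; the stopping rule is a numerical criterion, not a proof device, and nothing guarantees the iterates become exactly stationary in finite time. The clean version is the one you gesture at: monotonicity and the lower bound give ${\bf S}^{j}\rightarrow{\bf S}^{\infty}$; passing to the limit in \eqref{eq:PIVal}--\eqref{eq:PIPol}, using continuity of both maps in ${\bf S}$ and invertibility of ${\bf S}^{\infty}_{{\bf \mu}{\bf \mu}}$ (guaranteed by $\mathcal{R}\succ 0$), shows that the limiting pair satisfies \eqref{eq:Bello} and \eqref{eq:OpPol} simultaneously, and uniqueness of the stabilizing positive-definite solution of that equation yields $V^{\infty}=V^o$. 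With those two points fixed, your argument is a correct, self-contained proof of both claims.
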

	
	\begin{proof}
		The proof follows from~\cite{AbouheafCTT2015} and \cite{Derong2014} and is omitted due to space limitations. 
	\end{proof}
	 This result affirms that, the simultaneous solution of~\eqref{eq:PIVal} and~\eqref{eq:PIPol} will lead to a sequence of stabilizing strategies, provided that an admissible strategy is adopted at the initialization phase of the RL solution. Further, the kernel solution matrix ${\bf S}$ will be non-increasing and bounded below.
	\begin{rem}
		\label{rem:rmk2}
		Lemma~\ref{Lm:Lem1} and Theorem \ref{thm:Theorem1} reveal that, if an initial admissible strategy is followed, and the model-following error system is stabilizable around the desired reference trajectory, then, Theorem \ref{thm:Theorem2} ensures a continuous strategy improvement along that desired trajectory. \hfill $\square$
	\end{rem}	
	
	\section{Actor-Critic Implementation} \label{sec:crit}
	We shall now derive tuning laws for the actor and critic approximators to solve Algorithm~\ref{Alg:Alg1} in an online fashion. The actor and critic structures approximate the strategy \eqref{eq:PIPol} and value function in~\eqref{eq:PIVal}, respectively. Then, Lagrange optimization principles are adopted to tune the weights in real-time.
	
	The critic approximates the value function $V({\bf Z}_{k})$ in~\eqref{eq:PIVal} such that 
	\begin{equation}
		\hat V({\bf Z}_{k})=\frac{1}{2}{\bf Z}^\textrm{T}_{k}\, {\bf \Theta}_{k}\, {\bf Z}_{k},
		\label{crit}
	\end{equation}
	where the matrix ${\bf \Theta}_{k}={\bf \Theta}^\textrm{T}_{k} \succ 0 \in\mathbb{R}^{((r + 1) p+t) \times ((r+ 1) p+t)}$ contains the critic weights.
		
		This value function can be reshaped such that  $\hat V({\bf Z}_{k})={\bf \bar \Theta}_{k}\, { \bf \bar Z}_{k}$, with ${\bf \bar Z}_{k}=\bigg\{\left({\bf Z}^\xi_{k} \bigotimes {\bf Z}^\zeta_{k}\right), \, \xi=1,\dots,((r+1) p+t),$ $\zeta=\xi,\dots,((r+1) p+t)\bigg\},$ $ { \bf \bar Z}_{k}\in \mathbb{R}^q, q= ((r + 1) p+t)  ((r + 1) p+t+1)/2,$ and ${\bf \bar \Theta}^\textrm{T}_{k} \in \mathbb{R}^q$ is a vector of the entries obtained from the matrix $\frac{1}{2}{\bf \Theta}_{k}$ that are associated with those of ${ \bf \bar Z}_{k}$. This form is more convenient to use in the policy improvement step of Algorithm~\ref{Alg:Alg1}.

	The optimal strategy \eqref{eq:OpPol} is then approximated using an actor adaptive structure such that
	\begin{equation}
		{\bf \hat \mu}(\mathcal{E}_{k})= {\bf \Omega}_{k}\,  {\bf \mathcal{E}}_{k}, \ \forall {\bf \mathcal{E}}_{k},
		\label{act}
	\end{equation}
	where ${\bf \Omega}_{k} \in\mathbb{R}^{t \times ((r+1) p)}$ are the actor weights of the strategy.
	
	This actor-critic structure can be easily employed by the temporal difference form \eqref{eq:PIVal} leading to a variety of RL solutions. Herein, the adaptation schemes of the actor-critic structures are inspired by Kaczmarz's projection approach~\cite{astrom2013}. 
	%The following discusses how the neural network weights are adapted using the Lagrange dynamics.
	%	\begin{thm}
		%	\end{thm}
	%	\begin{proof}
		%	\end{proof}
	%Then, the online adaptations of the actor ${\bf \Omega}_{k}$ and critic ${\bf \Theta}_{k}$ neural weights follow a Lagrange optimization process.
	%Let the value function $V({\bf Z}_{k})$ be approximated by \eqref{crit} and the optimal strategy $\eqref{eq:OpPol}$ be approximated by \eqref{act}. 
	
	Let $\tilde V({\bf Z}_{k})={\bf \bar \Theta}_{k}$ with $ {\bf \tilde Z}_{k},$ ${\bf \tilde Z}_{k}= \left({ \bf \bar Z}_{k} \, - \,{ \bf \bar Z}_{k+1}\right)$ and $\tilde V^d\,\left({\bf Z}_{k}\right)={\cal U}_k \, \left(\mathcal{E}_k,{\bf \hat \mu}_k\right)$. It is required to choose the weights ${\bf \bar \Theta}_{k}$ to minimize $\lVert{\bf \bar \Theta}_k-{\bf \bar \Theta}_{k-1}\rVert$ subject to the constraint $\tilde V({\bf Z}_{k})=\tilde V^d({\bf Z}_{k})$. This is done using a Lagrange optimization process to minimize the function    
	$${\cal V}_{V_k}=\left({\bf \bar \Theta}_k-{\bf \bar \Theta}_{k-1}\right)\left({\bf \bar \Theta}_k-{\bf \bar \Theta}_{k-1}\right)^\textrm{T}+\alpha_V \left(\tilde V({\bf Z}_{k})-\tilde V^d({\bf Z}_{k})\right),$$ 
	where $\alpha_V$ is a Lagrange multiplier.
	Applying the optimization principles (i.e., $\displaystyle {\partial {\cal V}_{V_k}}/{\partial {\bf \bar \Theta}_k}=0$ and $\displaystyle   {\partial {\cal V}_{V_k}}/{\partial \alpha_V}=0$) yields 
	\begin{eqnarray*}
		\left({\bf \bar \Theta}_k-{\bf \bar \Theta}_{k-1}\right)^\textrm{T}+\alpha_V \, {\bf \tilde Z}_{k}=0 \text{ and }
		{\bf \bar \Theta}_{k} \, {\bf \tilde Z}_{k} \,-\,\tilde V^d \left({\bf Z}_{k}\right)=0.
	\end{eqnarray*}
	Hence, $\left({\bf \bar \Theta}_k-{\bf \bar \Theta}_{k-1}\right)\,{\bf \tilde Z}_{k}+\alpha_V \, {\bf \tilde Z}^\textrm{T}_{k}\,{\bf \tilde Z}_{k}=0$ and $\displaystyle \alpha_V=\frac{-1}{{\bf \tilde Z}^\textrm{T}_{k} \,{\bf \tilde Z}_{k}}\left(\tilde V^d\left({\bf Z}_{k}\right)-{\bf \bar \Theta}_{k-1}{\bf \tilde Z}_{k}\right)$. Therefore, 
	$
	{\bf \bar \Theta}_k\,=\,{\bf \bar \Theta}_{k-1}\,-\,\alpha_V \, {\bf \tilde Z}^\textrm{T}_{k} 
	$
	and
	%\begin{align*}
	$\displaystyle 	{\bf \bar \Theta}_k={\bf \bar \Theta}_{k-1}- \,  
	\frac{{\bf \tilde Z}^\textrm{T}_{k}}{{\bf \tilde Z}^\textrm{T}_{k}\,{\bf \tilde Z}_{k}}\left({\bf \bar \Theta}_{k-1}{\bf \tilde Z}_{k}\,-\,\tilde V^d\left({\bf Z}_{k}\right)\right).$
	%\end{align*}	
	%
	To manage the update steps of the adapted critic weights, a factor $\delta_V$ is considered in addition to a constant $\eta_V$ to avoid the singularity issue when updating the critic weights to write
	\begin{equation}
		{\bf \bar \Theta}_k={\bf \bar \Theta}_{k-1}- \,  
		\frac{\delta_V \, {\bf \tilde Z}^\textrm{T}_{k}}{\eta_V+{\bf \tilde Z}^\textrm{T}_{k}{\bf \tilde Z}_{k}}\left({\bf \bar \Theta}_{k-1}{\bf \tilde Z}_{k}-\tilde V^d({\bf Z}_{k})\right).
		\label{eq:crtup}
	\end{equation}
	The weights ${\bf \Theta}_k$ can be reconstructed from the solution ${\bf \bar \Theta}_k$. 
	In a similar fashion, the actor weights are selected to minimize the approximation error $\lVert{\bf \Omega}_k-{\bf \Omega}_{k-1}\rVert$ subject to the constraint ${\bf \hat \mu}(\mathcal{E}_{k})={\bf \hat \mu}^d(\mathcal{E}_{k})$ where ${\bf \hat \mu}^d(\mathcal{E}_{k})=-{\bf \Theta}_{{\bf \mu} {\bf \mu}}^{-1}{\bf \Theta}_{{\bf \mu} \mathcal{E}}  \, \mathcal{E}_k$. Hence, the adaptation error of the actor weights, representing the optimal strategy, can be minimized using the following function 
	$${\cal V}_{\mu_k}=\left({\bf \Omega}_k-{\bf \Omega}_{k-1}\right)\left({\bf \Omega}_k-{\bf \Omega}_{k-1}\right)^\textrm{T}+\alpha_\mu \left({\bf \hat \mu}(\mathcal{E}_{k})-{\bf \hat \mu}^d(\mathcal{E}_{k})\right),$$
	where $\alpha_\mu$ is a Lagrange multiplier. Applying the optimality conditions (i.e., $\displaystyle {\partial {\cal V}_{\mu_k}}/{\partial {\bf \Omega}_k}=0$ and $\displaystyle {\partial {\cal V}_{\mu_k}}/{\partial \alpha_\mu}=0$) yields 
	\begin{eqnarray*}
		\left({\bf \Omega}_k-{\bf \Omega}_{k-1}\right)^\textrm{T}+\alpha_\mu \, \mathcal{E}_{k}=0 \text{ and }
		{\bf \Omega}_{k}\, \mathcal{E}_{k}-{\bf \hat \mu}^d(\mathcal{E}_{k})=0.
	\end{eqnarray*}
	Hence, 
	$
	\left({\bf \Omega}_k-{\bf \Omega}_{k-1}\right)\, \mathcal{E}_{k}+\alpha_\mu \, \mathcal{E}^\textrm{T}_{k}\mathcal{E}_{k}=0
	$
	and $ \displaystyle \alpha_\mu=\frac{-1}{\mathcal{E}^\textrm{T}_{k}\mathcal{E}_{k}}\left({\bf\hat \mu}^d(\mathcal{E}_{k})-{\bf \Omega}_{k-1}{\bf \mathcal{E}}_{k}\right).
	$
	The actor tuning law is given by
	$\displaystyle 
	{\bf \Omega}_k={\bf \Omega}_{k-1}- \,  
	\frac{\mathcal{E}^\textrm{T}_{k}}{\mathcal{E}^\textrm{T}_{k}\mathcal{E}_{k}}\left({\bf \Omega}_{k-1}{\bf \mathcal{E}}_{k}-{\bf\hat \mu}^d(\mathcal{E}_{k})\right).
	$
	Similarly, a refined actor adaptation law is given by		\begin{equation}
		{\bf \Omega}_k={\bf \Omega}_{k-1}- \,  
		\frac{\delta_\mu \,  \mathcal{E}^\textrm{T}_{k}}{\eta_\mu+\mathcal{E}^\textrm{T}_{k}\mathcal{E}_{k}}\left({\bf \Omega}_{k-1}{\bf \mathcal{E}}_{k}-{\bf \hat \mu}^d(\mathcal{E}_{k})\right),
		\label{eq:actup}
	\end{equation}
	where $\delta_\mu$ and $\eta_\mu$ are constants to control the update steps of the adapted actor weights and to avoid singularity when $\mathcal{E}_{k}=0$, respectively.

	The next result shows how to get approximate bounds on $\delta_V$ and $\delta_\mu$ to ensure convergence of the adapted weights.
	\begin{lem}
		\label{lem:const}
		Let ${\bf \Theta}^{o}$ and ${\bf \Omega}^{o}$ be the optimal weights representing the solution of the Bellman equation \eqref{eq:Bello} and the optimal control gains \eqref{eq:OpPol}, respectively. Then, choosing $0<\delta_V<2$ and $0<\delta_\mu<2$ yields bounded deviations of the critic and actor weights from the optimal weights.  
	\end{lem}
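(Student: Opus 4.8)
The plan is to track the propagation of the weight estimation errors and to construct a non-increasing Lyapunov-type sequence for each. I introduce the critic error $\bm{\phi}_k = {\bf \bar \Theta}_k - {\bf \bar \Theta}^o$ and the actor error $\bm{\psi}_k = {\bf \Omega}_k - {\bf \Omega}^o$, where ${\bf \bar \Theta}^o$ is the vectorized form of ${\bf \Theta}^o$. The pivotal observation is that the optimal weights annihilate the residuals in the two constraints used to derive \eqref{eq:crtup} and \eqref{eq:actup}: since ${\bf \Theta}^o$ solves the Bellman equation \eqref{eq:Bello} one has ${\bf \bar \Theta}^o {\bf \tilde Z}_k = \tilde V^d({\bf Z}_k)$, and since ${\bf \Omega}^o$ realizes \eqref{eq:OpPol} one has ${\bf \Omega}^o \mathcal{E}_k = {\bf \hat \mu}^d(\mathcal{E}_k)$. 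Substituting these identities turns the residual $\left({\bf \bar \Theta}_{k-1}{\bf \tilde Z}_k - \tilde V^d({\bf Z}_k)\right)$ into $\bm{\phi}_{k-1}{\bf \tilde Z}_k$ and $\left({\bf \Omega}_{k-1}\mathcal{E}_k - {\bf \hat \mu}^d(\mathcal{E}_k)\right)$ into $\bm{\psi}_{k-1}\mathcal{E}_k$, so that the update laws become the homogeneous error recursions
\begin{equation*}
\bm{\phi}_k = \bm{\phi}_{k-1} - \frac{\delta_V\left(\bm{\phi}_{k-1}{\bf \tilde Z}_k\right)}{\eta_V + {\bf \tilde Z}^\textrm{T}_k {\bf \tilde Z}_k}\, {\bf \tilde Z}^\textrm{T}_k, \qquad \bm{\psi}_k = \bm{\psi}_{k-1} - \frac{\delta_\mu\left(\bm{\psi}_{k-1}\mathcal{E}_k\right)}{\eta_\mu + \mathcal{E}^\textrm{T}_k \mathcal{E}_k}\, \mathcal{E}^\textrm{T}_k .
\end{equation*}

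For the critic I take the candidate $L^V_k = \bm{\phi}_k \bm{\phi}^\textrm{T}_k = \lVert \bm{\phi}_k \rVert^2$ and expand it along the recursion. Writing $s = \bm{\phi}_{k-1}{\bf \tilde Z}_k$ and $z = {\bf \tilde Z}^\textrm{T}_k {\bf \tilde Z}_k$, the two cross products each reduce to $s$ and the quadratic term contributes $\delta_V^2 s^2 z/(\eta_V + z)^2$, so that the increment collapses to
\begin{equation*}
L^V_k - L^V_{k-1} = -\frac{\delta_V\, s^2}{\eta_V + z}\left(2 - \frac{\delta_V\, z}{\eta_V + z}\right).
\end{equation*}
Since $\eta_V > 0$ forces the normalized factor $z/(\eta_V + z)\in[0,1)$, the bracket is bounded below by $2-\delta_V$, which is strictly positive exactly when $\delta_V < 2$; combined with $\delta_V > 0$ this renders the right-hand side non-positive. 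Hence $L^V_k$ is non-increasing and bounded by $L^V_0$, so $\lVert {\bf \bar \Theta}_k - {\bf \bar \Theta}^o \rVert \le \lVert {\bf \bar \Theta}_0 - {\bf \bar \Theta}^o \rVert$ for all $k$ and the critic deviation remains bounded.

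For the actor the argument is structurally identical but uses the Frobenius norm $L^\mu_k = \operatorname{tr}\left(\bm{\psi}_k \bm{\psi}^\textrm{T}_k\right)$, since ${\bf \Omega}_k$ is a matrix. Setting $v = \bm{\psi}_{k-1}\mathcal{E}_k$ and invoking the cyclic property of the trace, both cross terms evaluate to $v^\textrm{T} v = \lVert v \rVert^2$ and the quadratic term to $\lVert \mathcal{E}_k \rVert^2 \lVert v \rVert^2$, giving
\begin{equation*}
L^\mu_k - L^\mu_{k-1} = -\frac{\delta_\mu\, \lVert v \rVert^2}{\eta_\mu + \lVert \mathcal{E}_k \rVert^2}\left(2 - \frac{\delta_\mu\, \lVert \mathcal{E}_k \rVert^2}{\eta_\mu + \lVert \mathcal{E}_k \rVert^2}\right),
\end{equation*}
which is non-positive for $0 < \delta_\mu < 2$ by the same normalization bound, yielding boundedness of the actor deviation.

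The step I expect to be the main obstacle is the first one: justifying that the optimal pair $({\bf \Theta}^o, {\bf \Omega}^o)$ annihilates the residuals, i.e.\ that the targets $\tilde V^d$ and ${\bf \hat \mu}^d$ are consistent with exactly-representable optimal weights. These targets are themselves generated by the running weights---$\tilde V^d$ through the utility evaluated at the current control and ${\bf \hat \mu}^d$ through the current critic blocks ${\bf \Theta}_{{\bf \mu} {\bf \mu}}, {\bf \Theta}_{{\bf \mu} \mathcal{E}}$---so the two recursions are in fact coupled and the clean homogeneous form holds rigorously only at the common fixed point. I would resolve this by appealing to the quadratic-value / linear-policy structure established in Theorem~\ref{thm:Theorem1}, which guarantees the optimal pair is exactly representable and makes the fixed-point residuals vanish, and by treating the actor-critic cross-coupling as a perturbation that alters only the magnitudes $s$ and $v$ but never the sign of the per-step decrement. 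The decisive feature is the normalization $\delta_{(\cdot)}\, z /(\eta_{(\cdot)} + z) < \delta_{(\cdot)} < 2$, which secures contractivity independently of the excitation level, so that no persistence-of-excitation condition is required for boundedness (it would only be needed to strengthen the conclusion to convergence).
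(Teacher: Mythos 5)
Your proposal is correct, and its core coincides with the paper's own argument: like the paper, you reduce the update laws \eqref{eq:crtup} and \eqref{eq:actup} to \emph{homogeneous} recursions in the weight errors (the paper's ${\bf \bar \Theta}^e_k$ and ${\bf \Omega}^e_k$), which is the decisive step. Where you diverge is the finish. The paper analyzes the spectrum of the per-step transition matrix $I-\delta_V\,\tilde{Z}_k\tilde{Z}_k^{\textrm{T}}/(\eta_V+\tilde{Z}_k^{\textrm{T}}\tilde{Z}_k)$, noting one eigenvalue $\Lambda_V\in(-1,1)$ for $0<\delta_V<2$ with all remaining eigenvalues equal to $1$; you instead compute the squared-norm (Frobenius-norm, for the actor) decrement and show it is non-positive. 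These are two faces of the same non-expansiveness property, but your version is the tighter one as written: for a time-varying product of matrices, per-step spectral radius $\le 1$ does not by itself bound the product --- one needs the symmetry of these rank-one-perturbed identities to identify spectral radius with the induced $2$-norm, a point the paper leaves implicit --- whereas your Lyapunov decrement holds pointwise in $k$ no matter how the directions $\tilde{Z}_k$ and $\mathcal{E}_k$ rotate. You also explicitly surface the step the paper glosses over: the homogeneous form requires that the optimal weights annihilate the residuals, i.e., $\bar{\Theta}^{o}\tilde{Z}_k=\tilde{V}^d(Z_k)$ from \eqref{eq:Bello} and $\Omega^{o}\mathcal{E}_k=\hat{\mu}^d(\mathcal{E}_k)$ from \eqref{eq:OpPol}, and that the moving targets generated by the running actor--critic pair are consistent with this fixed point; the paper simply asserts the error recursion (its definition of $\bar{\Theta}^e_{k-1}$ even conflates the weight error with the temporal-difference residual), so flagging this consistency issue is a genuine, if still ultimately assumed, improvement in rigor. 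Your closing observation --- that $0<\delta_{(\cdot)}<2$ with $\eta_{(\cdot)}>0$ yields boundedness without any persistence-of-excitation condition, which would only be needed for convergence --- matches exactly what the lemma claims and what the paper's proof delivers.
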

	\begin{proof}
		According to \eqref{eq:crtup}, the temporal difference errors in the updated critic weights are given by 
		%\begin{align*}
		$ \displaystyle 
		{\bf \bar \Theta}^{e}_k={\bf \bar \Theta}^e_{k-1}- \,  
		\frac{\delta_V \, \left({\bf \bar \Theta}^e_{k-1}\,{\bf \tilde Z}_{k}\right){\bf \tilde Z}^\textrm{T}_{k}}{\eta_V+{\bf \tilde Z}^\textrm{T}_{k}\,{\bf \tilde Z}_{k}},
		$
		%\end{align*}
		where ${\bf \bar \Theta}^e_k\,=\,{\bf \bar \Theta}^o_k\,-\,{\bf \bar \Theta}_k$ and ${\bf \bar \Theta}^e_{k-1}\,=\,{\bf \bar \Theta}_{k-1}{\bf \tilde Z}_{k}\,-\,\tilde V^d\left({\bf Z}_{k}\right)$.
		Then $\displaystyle {\bf \bar \Theta}^{e \, T}_k=\left({\bf I}-\frac{\delta_V \, {\bf \tilde Z}_{k} \, {\bf \tilde Z}^\textrm{T}_{k}}{\eta_V+{\bf \tilde Z}^\textrm{T}_{k} \, {\bf \tilde Z}_{k}}\right)\,{\bf \bar \Theta}^{e\, T}_{k-1},$ where ${\bf I} \in \mathbb{R}^{((r + 1) p+t) \times ((r+ 1) p+t)}$ is an identity matrix. This dynamical expression has an eigenvalue $\displaystyle \Lambda_V=\left(\frac{\eta_V +(1-\delta_V) \, {\bf \tilde Z}^\textrm{T}_{k}{\bf \tilde Z}_{k}}{\eta_V+{\bf \tilde Z}^\textrm{T}_{k}{\bf \tilde Z}_{k}}\right)$ that is less than 1 if $0<\delta_V<2$ while the remaining eigenvalues are equal to 1. Similarly according to \eqref{eq:actup}, the temporal difference errors in the updated actor weights are given by $\displaystyle {\bf \Omega}^{e\, T}_k=\left({\bf \mathcal I}-\frac{\delta_\mu \, \mathcal{E}_{k}\mathcal{E}^\textrm{T}_{k}}{\eta_\mu+\mathcal{E}^\textrm{T}_{k}\mathcal{E}_{k} }\right)\,{\bf \Omega}^{e\, T}_{k-1},$ where ${\bf \Omega}^e_{k}={\bf \Omega}^o_{k}-{\bf \Omega}_{k}$ and ${\bf \mathcal I} \in \mathbb{R}^{((r + 1) p) \times ((r+ 1) p)}$ is an identity matrix. Accordingly, one eigenvalue associated with this form $\displaystyle\Lambda_\mu=\left(\frac{\eta_\mu+(1-\delta_\mu) \, \mathcal{E}^\textrm{T}_{k}\mathcal{E}_{k}}{\eta_\mu+\mathcal{E}^\textrm{T}_{k}\mathcal{E}_{k} }\right)$ is less than 1 when $0<\delta_\mu<2$. These results are true for $\eta_V>0$ and $\eta_\mu>0$.
	\end{proof}

\begin{algorithm}[htb!]
	% \doublespacing % to increase line spacing in this algorithm
	\setstretch{1} % to increase line spacing 
	\caption{\label{Alg:act-crt}Implementation of the Actor-Critic Solution}
	\begin{algorithmic}[1] % The number tells the line numbering frequency
		\Require
		\Statex Number of calculation steps ${\cal N}_T$.
		\Statex Constants $\delta_V, \delta_\mu, \eta_V,$ and $\eta_\mu$.
		% \Statex Initial tracking error vector $\bm{E}(\bm{0}), t=0$
		% \Statex Get the initial reference signal $\bm{u}^{ref}(0)$ and  $\bm{y}^{ref}(0)$
		\Statex Initial actor ${\bf \Omega}_{0}$ and critic ${\bf \Theta}_{0}$ weights.
		\Statex Performance index weighting matrices $\boldsymbol{\mathcal Q}$ and $\bm{\mathcal  R}$.
		% \Statex Initial actor and critic weights $\bm{\Theta}^{(0)}_a$ and $\bm{\Theta}^{(0)}_c$
		\Statex Convergence threshold ${\mathcal T_r}$ observed within a finite-horizon of ${\mathcal N}$ calculation steps.
		\Ensure
		% \Statex Error $e(t+k\,\Delta)=0, \quad k\rightarrow \cal {N}$
		\Statex Converged actor-critic weights ${\bf \Omega}^{o}$ and ${\bf \Theta}^{o}$.
		\Statex
		\State Initialize ${\bf X}_{0},$ $\mathcal{E}_{0},$ and ${\bf \Theta}_{0}$ \Comment{Use admissible strategy ${\bf \Omega}_{0}$}.
		\State Introduce the reference signal ${\bf y}^m_0$.
		\State $k \gets 0$ 
		\State Convergence-of-Actor-Critic-Weights $\gets$ False
		\While {Convergence-of-Actor-Critic-Weights $=$ False \AlgAnd $k \le {\cal N}_T$ }
		\State Calculate $\tilde V^d({\bf Z}_{k})={\cal U}_k \, \left(\mathcal{E}_k,{\bf \hat \mu}_k\right)$.
		\State Apply ${\bm \hat \mu}_{k}$ to the process (i.e., \eqref{eq:dyn}) and get ${\bm y}_{k+1}$ .
		\State Obtain the error vector $\mathcal{E}_{k+1}$ and then find an estimate for the control signal ${\bm \hat \mu}_{k+1}$ using \eqref{act}.
		\State Find $\hat V({\bf Z}_{k})$ and $\hat V({\bf Z}_{k+1})$ in order to get $\tilde V({\bf Z}_{k})$.  
		\State Adapt the critic weights ${\bf \Theta}_{0}$ following \eqref{eq:crtup}. 
		\State Adapt the actor weights ${\bf \Omega}_{0}$ following \eqref{eq:actup}. 
		\If{$\lVert{\bf \Theta}_k-{\bf \Theta}_{k-1}\rVert$\AlgAnd $\lVert{\bf \Omega}_k-{\bf \Omega}_{k-1}\rVert$ converge} 
		\State ${\bm\Theta}^{(o)} \gets {\bm\Theta}_{(k+1)}$ and ${\bm\Omega}^{(o)} \gets {\bm\Omega}_{(k+1)}$
		\State Convergence-of-Actor-Critic-Weights $\gets$ True 
		\EndIf
		\State $k \leftarrow k+1$
		\EndWhile
		\Statex \Return adapted weights ${\bf \Theta}_k$ and ${\bf \Omega}_k$, for $k=0,1,\dots, \cal{N}_T$
	\end{algorithmic}
\end{algorithm}
	
	\begin{rem}
		\label{rem:rmk3}
		Herein, the structure of the actor is chosen to be linear to enable the adopted optimal control setup and the associated temporal difference solution. This is convenient to many RL computational setups. It may not capture ultimately the behavior of highly nonlinear complicated systems, where other nonlinear forms of neural networks may be convenient. The steps of the online actor-critic solution are shown in Algorithm~\ref{Alg:act-crt} where probing noise is used for appropriate state exploration (persistence of excitation).
		 The initial phase of the RL solution aims to select arbitrarily the critic weights (i.e., ${\bf \Theta}_{0}={{\bf \Theta}_{0}}^\textrm{T}\succ 0$) to obtain an admissible strategy ${\bf \Omega}_{0}$, as emphasized in Theorem~\ref{thm:Theorem2}. Further, the constants $\delta_V$ and $\delta_\mu$ must be decided according to the conditions designated by Lemma~\ref{lem:const} (i.e., $0<\delta_V<2$ and $0<\delta_\mu<2$). Finally, the values of $\boldsymbol{\mathcal Q}$ and $\bm{\mathcal  R}$ are selected to achieve the intended optimization objectives, namely the paces at which the tracking error dynamics $\mathcal{E}_k$ and the correction control signal ${\bf \mu}^{\bf \omega}_k$ are regulated.	
		The adaptive solution developed herein relies on solving two optimization problems. The first one provides a PI solution to a Bellman's optimality equation \eqref{eq:Bello} following an optimal strategy \eqref{eq:OpPol}. This solution generates a sequence of non-increasing value functions $0 < V\left(..\right)^o\le \dots \le V\left(..\right)^1 \le V\left(..\right)^0$. The second problem optimizes the performance of the actor and critic adaption schemes by projecting the vectors ${\bf \Omega}_{k}$ and ${\bf \bar \Theta}_{k-1}$ on the vectors $\mathcal{E}_{k}$ and ${\bf \tilde Z}_{k}$, respectively.  \hfill  $\square$
	\end{rem}		
	
	\section{Simulation Results\label{sec:Sim}}
	The efficacy of the model-free adaptive learning solution is tested using: (i) a linear system with state and input delays and (ii) a nonlinear system. Further, two model-following approaches based on sliding mode and high-order model-free adaptive control schemes are considered for comparison purposes~\cite{MPC2014,MFAC2021}. 
	
	\subsection{Case 1: Linear System with State and Input Delays}
	An Autonomous Underwater Vehicle (AUV) is adopted to validate the online model-free projection solution~\cite{AUV1,MPC2014}. A linear system with state and input delays is given by ${\bf X}_{k+1}={\bf A} \, {\bf X}_{k}+{\bf B} \, {\bf u}_{k}+{\bf A}_d \, {\bf X}_{k-d}+{\bf B}_h\, {\bf u}_{k-h},{\bf y}_{k}={\bf C} \, {\bf X}_{k},$ where $\footnotesize{\bf A}:=\left[
	\begin{array}{ccc}
		0.9817 &-0.0119 &0\\
		0.0099  &0.9999 &0\\
		0      &-0.01  & 1
	\end{array}
	\right],$ 
	$\footnotesize{\bf A}_d:=\left[
	\begin{array}{ccc}
		0.0099 & 0.005 &  0.005\\
		0     & -0.001 & -0.0005\\
		-0.001 & -0.0005 & 0.001
	\end{array}
	\right],{\bf B}:=\left[
	\begin{array}{c}
		-0.0131\\
		-0.0001\\
		0
	\end{array}
	\right],$ $\footnotesize {\bf B}_h:=\footnotesize \left[\begin{array}{c}
		0.001 \\  0.0001 \\ 0.0001
	\end{array}
	\right],
	{\bf X}_0 =\footnotesize \left[ \begin{array}{c}
		-0.2956\\
		-0.7210\\
		-1.7932
	\end{array}
	\right],
	$
	$\footnotesize {\bf C}:=\left[
	\begin{array}{ccc}
		-0.8784 &  -2.3961  &  0.6464
	\end{array}
	\right], d:=10,$ and $h:=20$. The AUV follows a dynamic trajectory given by ${\bf X}^m_{k+1}={\bf A}^m \, {\bf X}^m_{k}$ and ${\bf y}^m_{k}={\bf C}^m \, {\bf X}^m_{k},$ where  $\footnotesize{\bf A}^m:=\left[
	\begin{array}{ccc}
		1&0.01 &  \\
		-0.01 & 1 
	\end{array}
	\right]$ and $\footnotesize{\bf C}^m:=\left[
	\begin{array}{ccc}
		1 & 0
	\end{array}
	\right]
	$. 
	These dynamic forms are used to observe the model-following errors. However, neither the dynamics of the AUV nor those of the model-to-follow are employed explicitly to find the online solution. The  discrete-time AUV dynamic system is sampled from a continuous-time dynamic model with a sampling time of $T_s=0.01$ second~\cite{AUV1,MPC2014}. It is noted that, using three error samples (i.e., $r=2$) is found to be sufficient to pick the model-following error dynamics. The remaining simulation parameters are listed in Table~\ref{Tab1}. 
	
	The PI mechanism employs a probing noise during the first $2.5$ seconds to satisfy the persistence of excitation condition. This propping noise is useful to better explore the dynamic environment without completely exploiting the strategies. The RL solution is compared with a robust model-following approach that is based on Sliding Mode Control (SMC)~\cite{MPC2014}. This approach relies on knowing the full dynamical information of the AUV and the model-to-follow as well. The steps of the MPC solution can be summarized as follows~\cite{MPC2014}. 
	\begin{enumerate}
		\item Find $\bf G$ and $\bf H$ to satisfy $\footnotesize\left[
		\begin{array}{cc}
			\bf{A}&\bf{B}   \\
			\bf{C} & {0} 
		\end{array}
		\right]\left[
		\begin{array}{c}
			\bf{G}   \\
			\bf{H} 
		\end{array}
		\right]=\left[
		\begin{array}{c}
			\bf{G} \,  \bf{A}^m   \\
			\bf{C}^m 
		\end{array}
		\right]$. According to (\cite{GH90}), $\footnotesize\bf{G}=\left[
		\begin{array}{cc}
			-0.1785 &   0.0292\\
			-0.3275  & -0.0152\\
			0.0905  & -0.0326\\
		\end{array}
		\right], {\bf H}=\left[
		\begin{array}{cc}
			0.9999 &  -0.0100
			
		\end{array}
		\right]$.
		\item ${\bf\tau}_k={\bf X}_{k}-{\bf G}{\bf X}^m_{k}$.
		\item ${E}_{k}={E}_{k-1}+{\bf S}_f\,{\bf (A+B\, K)}{\bf\tau}_{k-1}+{\bf S}_f\,{\bf A}_d{\bf\tau}_{k-11}+{\bf S}_f\,{\bf B}_h{\bf\tau}_{k-21}$, ${E}_{0}=0$, where $\bf{S}_f=\left[
		\begin{array}{ccc}
			1 &1.85& -0.825
		\end{array}
		\right]$ and $\bf{K}=\left[
		\begin{array}{ccc}
			145.9573 &270.1303& -120.8601
		\end{array}
		\right]$.
		\item $\sigma_k={\bf S}_f\,{\bf\tau}_k-{\bf S}_f\,\textrm{exp}(-0.1\, k){\bf\tau}_0-{E}_{k}$.
		\item ${R}_k={R}_{k-1}+\sigma_k-0.5\,\delta_{k-1}$.
		\item The control signal is calculated using a form given by $\footnotesize {\bf u}_{k}={\bf H} \, {\bf X}^m_{k}+\left[
		\begin{array}{ccc}
			145.9573 & 270.1303 &-120.8601
		\end{array}
		\right]\,{\bf\tau}_k+\left[
		\begin{array}{ccc}
			-75.6846 & -140.0165 & 62.4398
		\end{array}
		\right]\textrm{exp}(-0.1(k+1)){\bf\tau}_0+75.6846 \, R_k-75.6846\, E_k-37.8423 \,\sigma_k$.
	\end{enumerate}
	
	%Further, the delay units in the states and control signals (or equivalently the model-following error samples) serve as additional sources to guarantee the persistence of excitation. Two simulation cases are considered where different independent reference commands are highlighted. 
	\begin{table*}
		\caption{Simulation and Learning Parameters}
		\label{Tab1}
		\begin{tabularx}{\textwidth}{@{}l*{11}{C}c@{}}
			\toprule
			Parameter   &  Value & Parameter   &  Value & Parameter   &  Value & Parameter   &  Value & Parameter   &  Value\\
			\midrule
			${\bf \mathcal Q}$ & $0.05\,I_{3}$   & ${\bf \mathcal R}$   & $0.01\,I_{3}$ & ${\cal N}_T$  & $4000$ & ${\mathcal N}$ & $30$&
			$r$ & $2$ \\ ${\mathcal T_r}$ &	$0.0005$	& 
			$\delta_V$    & $0.5$  & $\eta_V$   & $1.5$&
			$\delta_\mu$  & $0.5$  & $\eta_\mu$   & $1.5$ &
			\\
			\bottomrule	\end{tabularx}
	\end{table*}

	Fig. \ref{fig:fig1} presents the simulation results of the online RL and SMC solutions. The tuned actor-critic weights are demonstrated to converge after some exploration phase, as illustrated by Fig.~\ref{fig:act1}~and~\ref{fig:crt1}. The resulting control signal follows the form depicted by Fig. \ref{fig:con1}. The RL solution shows appropriate model-following after $4$ seconds, while the SMC will exhibit a small model-following error offset of $0.1$ as highlighted by Fig.~\ref{fig:ref1}~and~\ref{fig:err1}. Unlike the SMC solution, the RL solution approach does not employ any explicit knowledge of the model-to-follow and the AUV dynamics to calculate the model-following control strategy. 
	%	In this experiment, the initial control gains are selected so that $ {\bf \Omega}_0=
	%	\left[
	%	\begin{array}{ccc}
		%		0.7821  & -0.3926 &  -0.5172
		%	\end{array}
	%	\right]
	%	$. 
	%	The behavior of the resulting model-following system acts according to ${\bf X}_{k+1}={\bf A} \, {\bf X}_{k}+{\bf A}_d \, {\bf X}_{k-10}+{\bf B} \, {\bf \Omega}_{k}\, {\bf \mathcal{E}}_k+{\bf B}_d\, {\bf \Omega}_{k}\,{\bf \mathcal{E}}_{k-20},$ where 
	%
	The actor weights converge during the first $7$ seconds and the resulting strategy has gains given by
	$ {\bf \Omega}=
	\left[
	\begin{array}{ccc}
		4.0168 &  -0.2670 &  -2.7488
	\end{array}
	\right].$
	
	\subsection{Case 2: Nonlinear System}
	In this case, our RL solution will be compared to an improved high-order Model Free Adaptive Control (MFAC) approach~\cite{MFAC2021}. This is simulated using a nonlinear dynamical process described by
	\begin{align*}
		{\bf y}_{k+1}
		& \small
		=
		\begin{cases}
			\small
			\frac{{\bf y}_{k}}{1+{\bf y}^2_{k}}+{\bf u}^3_{k}, \hspace*{1em} \text{for }  k \le {\cal N}_T/2\\
			\displaystyle
			\frac{{\bf y}_{k} \, {\bf y}_{k-1} \, {\bf y}_{k-2} \, {\bf u}_{k-1}\, \left({\bf y}_{k-2}-1\right)+{\textrm{round}(2\,k/{\cal N}_T)}{\bf u}_{k}}{1+{\bf y}^2_{k-1}+{\bf y}^2_{k-2}}, \\ \hspace*{1em} \text{for } {\cal N}_T/2 < k \le {\cal N}_T
		\end{cases}
	\end{align*}
	where $\textrm{round}(\cdot)$ returns the value of a number rounded to the nearest integer.
	The desired response follows a dynamic behavior given by
	\begin{align*}
		\footnotesize
		{\bf y}^m_{k+1}
		&\footnotesize
		=
		\begin{cases}
			\footnotesize
			0.5\,\sin\left(k\,\pi/100\right)+0.3\,\cos\left(k\,\pi/50\right),  \text{for }  k \le {\cal N}_T/5\\
			0.5 \, (-1)^{\textrm{round}(2\,k/{\cal N}_T)},  \text{for } {\cal N}_T/5 < k \le 2{\cal N}_T/5\\
			0.5\,\sin\left(k\,\pi/100\right)+0.3\,\cos\left(k\,\pi/50\right),  \text{for } 2{\cal N}_T/5 < k \le 4{\cal N}_T/5\\
			-0.4 \, (-1)^{\textrm{round}(2\,k/{\cal N}_T)}, \text{for } 4{\cal N}_T/5 < k \le {\cal N}_T.
		\end{cases}
	\end{align*}

	The improved high-order MFAC approach is implemented using a set of dynamic recursive equations as follows~\cite{MFAC2021}; First, an estimation law is calculated such that
	\begin{eqnarray*}
		\tiny
		\phi_k&=&\sum_{i=1}^{6}\beta(i)\phi_{k-i}+\frac{0.8({\bf u}_{k-1}-{\bf u}_{k-2})}{0.01+({\bf u}_{k-1}-{\bf u}_{k-2})^2}\left(({\bf y}_{k}-{\bf y}_{k-1})-\right.\\ 
		& &\tiny \left. ({\bf u}_{k-1}-{\bf u}_{k-2})\sum_{i=1}^{6}\beta(i)\phi_{k-i}\right), \phi_0=0.5.
	\end{eqnarray*}
	Then, the control law is computed as follows
	\begin{eqnarray*}
		\tiny
		{\bf u}_{k}&=& \frac{\phi^2_k}{0.1+\phi^2_k}{\bf u}_{k-1}+ \frac{0.1}{0.1+\phi^2_k}\sum_{i=1}^{4} \bar\beta(i){\bf u}_{k-i}+\frac{0.8 \,\phi_k\left({\bf y}^m_{k}-{\bf y}_{k}\right)}{0.1+\phi^2_k},
	\end{eqnarray*}
	where $\small\beta=\left[
	\begin{array}{cccccc}
		1/2 &1/4 &1/8 &1/16& 1/32 & 1/32
	\end{array}
	\right]$ and $\small \bar \beta=\left[
	\begin{array}{cccccc}
		1/2 &1/4 &1/8 &1/8
	\end{array}
	\right]$.

	%	Another independent command signal generator is considered to challenge the performance of the proposed adaptive learning mechanism. The signal takes a nonuniform pulse command structured as follows  
	%	\begin{align*}
		%		{\bf y}^m_{k}
		%		&
		%		=
		%		\begin{cases}
			%			1.7453 \, \left(1-e^{-0.0133 \, k}+\sin\left(0.08\,k\,\pi/{\cal N}_T\right)\right), \hspace*{-1em} \\ \text{for } 1 \leq k \le 333\\
			%			1.7453 \, \left(e^{0.0133 \, \left(\frac{2\,{\cal N}_T }{3}-k\right)}+\sin\left(0.08\,k\,\pi/{\cal N}_T\right)\right), \hspace*{-1em} \\ \text{for } 333 <  k \le {\cal N}_T.
			%		\end{cases}
		%	\end{align*}
	%
	%The initial admissible control gains are set to  $ {\bf \Omega}_0=
	%\left[
	%\begin{array}{ccc}
	%	0.3074   -0.0772   -1.4744
	%\end{array}
	%\right]
	%$. 
	%
	The simulation results are shown in Fig. \ref{fig:fig2}. The actor-critic weights are shown to converge after some initial learning phase as demonstrated by Fig.~\ref{fig:act2}~and~\ref{fig:crt2}.
	The RL solution is shown to outperform the improved high-order MFAC as depicted from the analysis ofthe model-following and error performances (see Fig.~\ref{fig:ref2}~and~\ref{fig:err2}). Further,the RL solution can capture the abrupt changes in the dynamics of the reference model. Fig. \ref{fig:con2} shows the control signal, where the gains of the control strategy converge to  $ {\bf \Omega}=
	\left[
	\begin{array}{ccc}
		0.9786  & -0.1755  &  0.2834
	\end{array}
	\right]
	$. Therefore, applying probing noise, considering nonlinear and linear systems with delays, and employing drastic nonlinear model-reference forms challenged the robustness of the RL solution.
	
	%	The model-following RL solution is compared with model-based and model-free approaches. These consider a class of discrete-time systems with state and input delays and a nonlinear dynamic system as well. The simulation results emphasized the effectiveness of the proposed actor-critic structure in such scenarios.  
	\subsection{Discussion}	
	The simulation outcomes consolidated the theoretical setup in terms of the stability and convergence aspects. First, Assumptions~\ref{Asm:1}~and~\ref{Asm:2} leading to Theorem~\ref{thm:Theorem1} and Lemma~\ref{Lm:Lem1}, revealed the asymptotic stability characteristics of the RL solution. The tracking error dynamics are shown to be stabilized asymptotically when simulated for dynamic model-following scenarios and using systems of nonlinear and linear types with delays, as demonstrated by Fig.~\ref{fig:err1}~and~\ref{fig:err2}. Second, Theorem~\ref{thm:Theorem2} guarantees an improved sequence of stabilizing policies which is revealed by Fig.~\ref{fig:act1}~and~\ref{fig:act2}. Furthermore, the convergence of the adapted actor and critic weights is supported by Lemma~\ref{lem:const}.
	
	 This work presents an adaptive strategy ${\bf \Omega}$ which can be easily i) adopted in a digital setup without using complex function approximators (i.e., employing \eqref{eq:PIVal}, \eqref{eq:PIPol}, \eqref{eq:crtup}, and \eqref{eq:actup}), ii) implemented in a model-free and data-driven fashion (i.e., using \eqref{eq:PIPol} and \eqref{eq:actup}), iii) configurable to a desired order of error dynamics (see $\mathcal{E}_k$ and Fig. \ref{fig:fig}), and iv) adapted to reflect the attainment of simultaneous optimization goals (see Theorem~\ref{thm:Theorem1} and Lemmas~\ref{Lm:Lem1} and ~\ref{lem:const}). 
	
	%The AUV simulation model and the associated scenarios illustrated that the adaptive learning solution can be employed for a class of discrete-time systems with delays and independent reference command generators having challenging forms. This is backed by the convergence features of the adapted actor-critic weights and the associated stability characteristics. 

	\begin{figure*}[!hbt]
		\centering
		\subcaptionbox{The evolution of the actor weights~${\bf \Omega}_{k}$.%
			\label{fig:act1}}%[12em]%
		{%
			\includegraphics[width=0.33\textwidth]{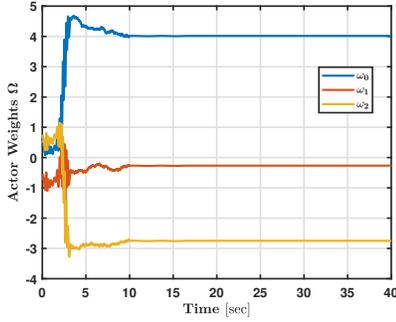}%
		}
		\hfill \hspace{-15pt}
		\subcaptionbox{The evolution of the critic weights~${\bf \bar \Theta}_{k}$.%
			\label{fig:crt1}}%[12em]%
		{%
			\includegraphics[width=0.33\textwidth]{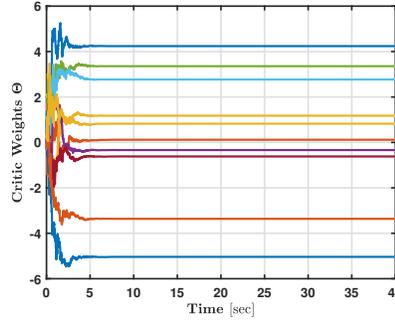}%
		}
		\subcaptionbox{The evolution of the control signal~${\bf u}_k$.%
			\label{fig:con1}}%[12em]%
		{%
			\includegraphics[width=0.33\textwidth]{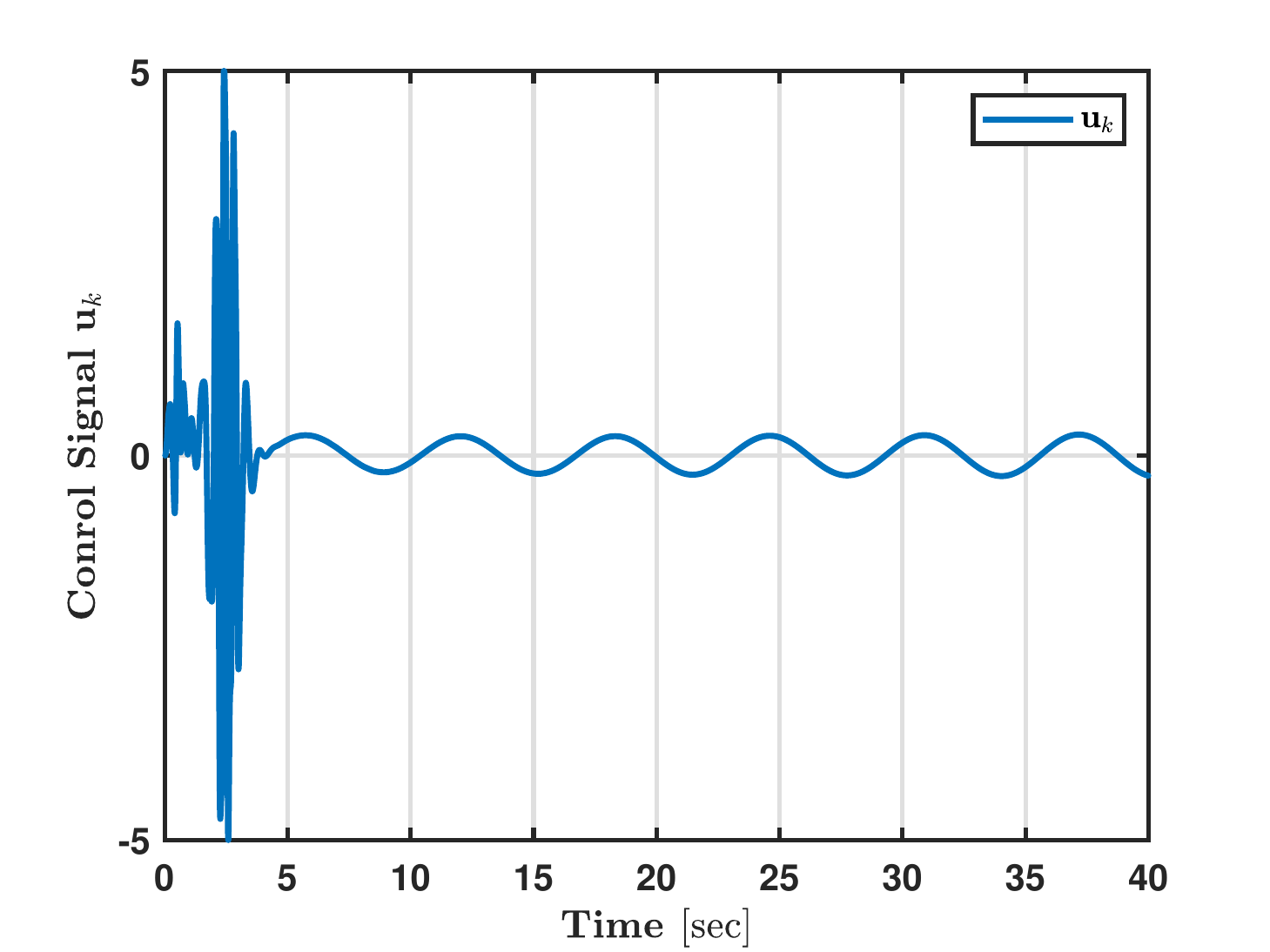}%
		}
		\hfill \hspace{-15pt}
		\\[2ex]
		\mbox{}\hfill
		\subcaptionbox{The evolution of the trajectory-tracking performance.%
			\label{fig:ref1}}%[12em]%
		{%
			\includegraphics[width=0.35\textwidth]{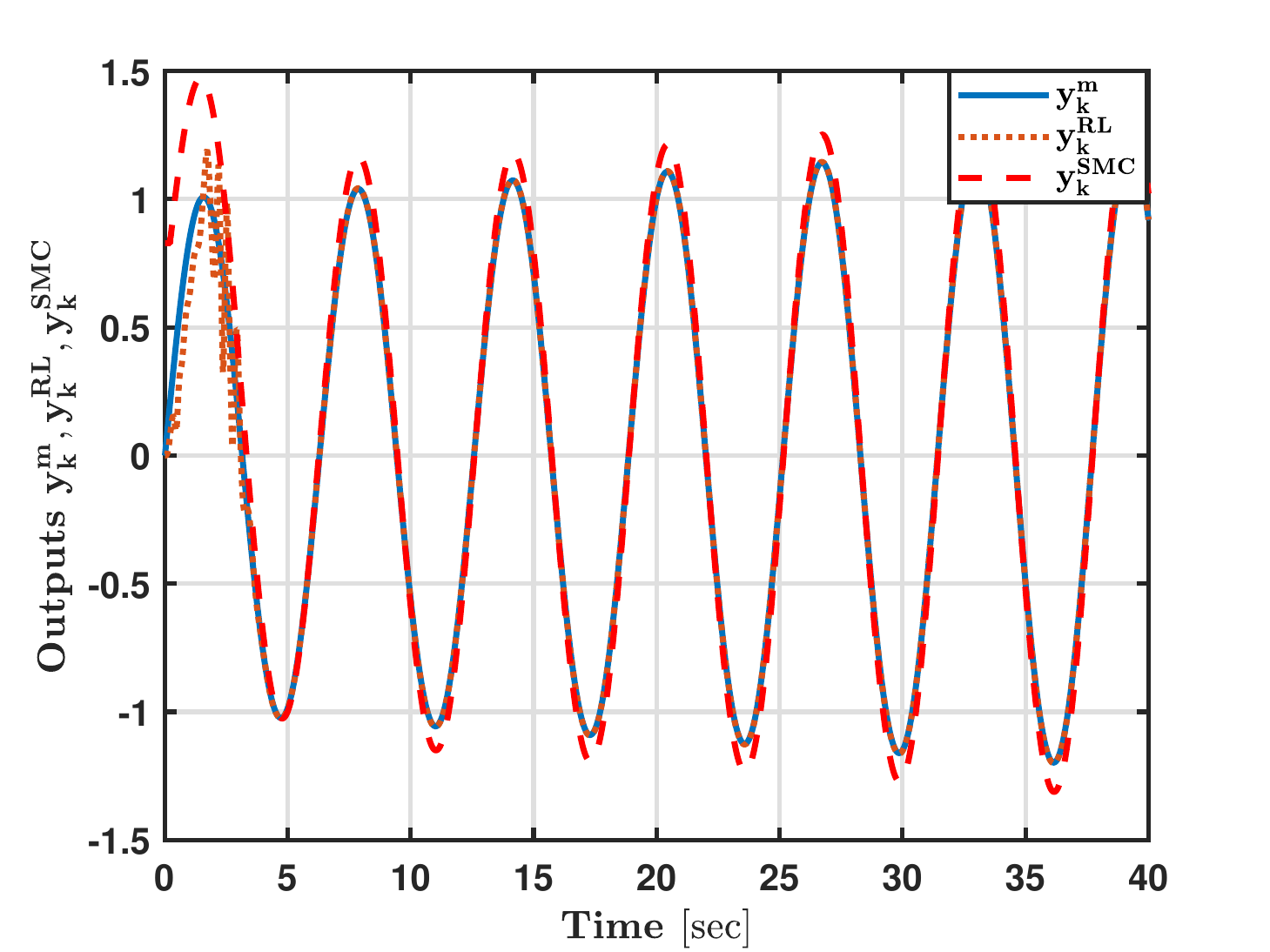}%
		}
		\hfill
		\subcaptionbox{The evolution of the error signal ${\bf \varepsilon}_k$.%
			\label{fig:err1}}%[12em]%
		{%
			\includegraphics[width=0.35\textwidth]{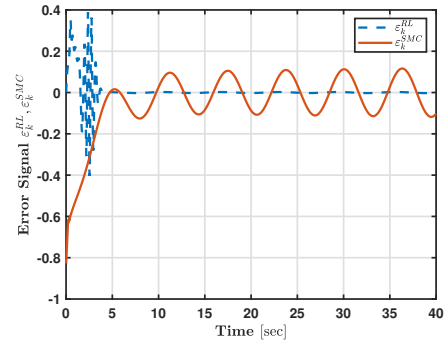}%
		}
		\hfill\mbox{}
		\caption{Simulation results of Case~1.\label{fig:fig1}} 
	\end{figure*}
	
	\begin{figure*}[!hbt]
		\centering
		\subcaptionbox{The evolution of the actor weights~${\bf \Omega}_{k}$.%
			\label{fig:act2}}%[12em]%
		{%
			\includegraphics[width=0.33\textwidth]{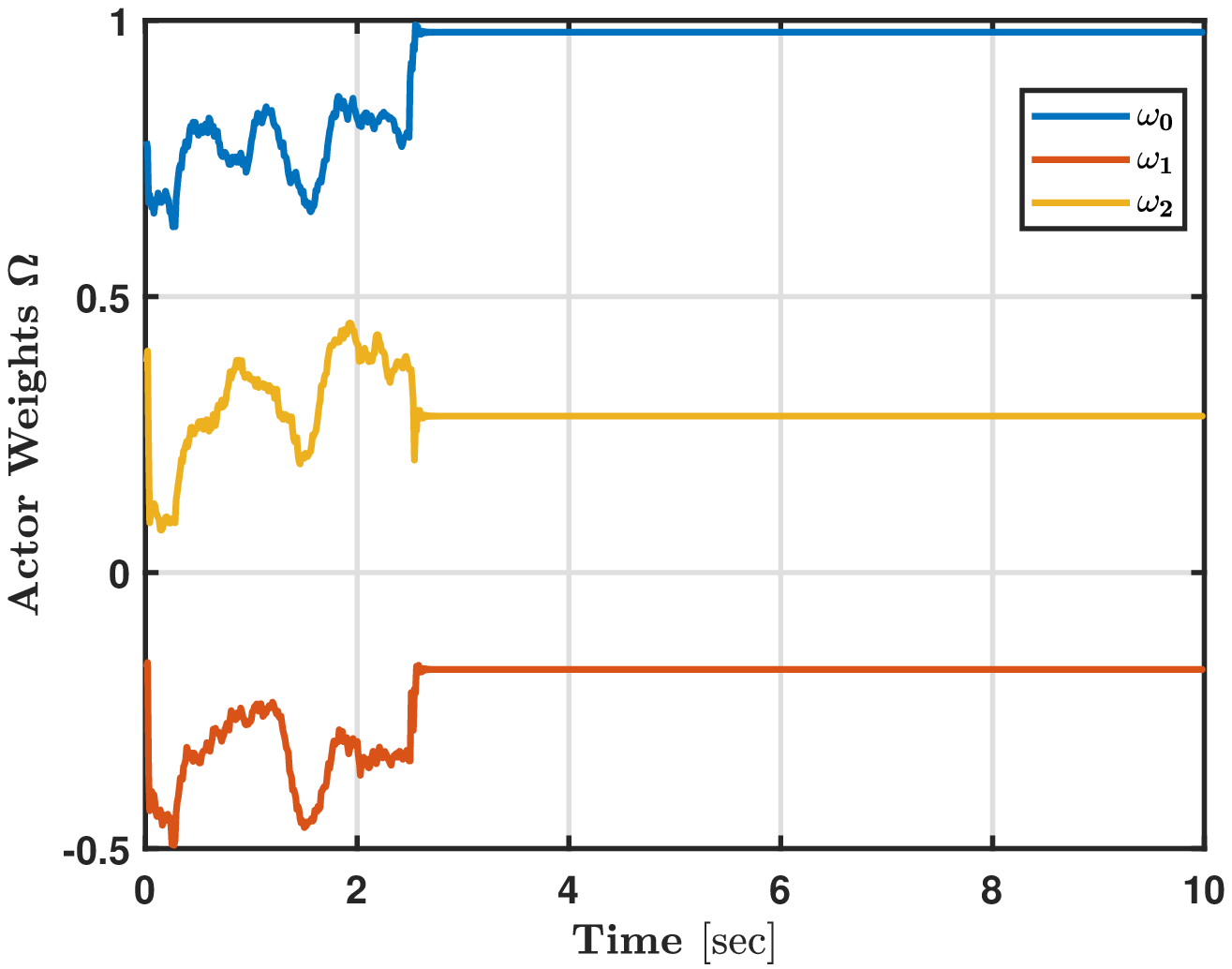}%
		}
		\hfill \hspace{-15pt}
		\subcaptionbox{The evolution of the critic weights~${\bf \bar \Theta}_{k}$.%
			\label{fig:crt2}}%[12em]%
		{%
			\includegraphics[width=0.33\textwidth]{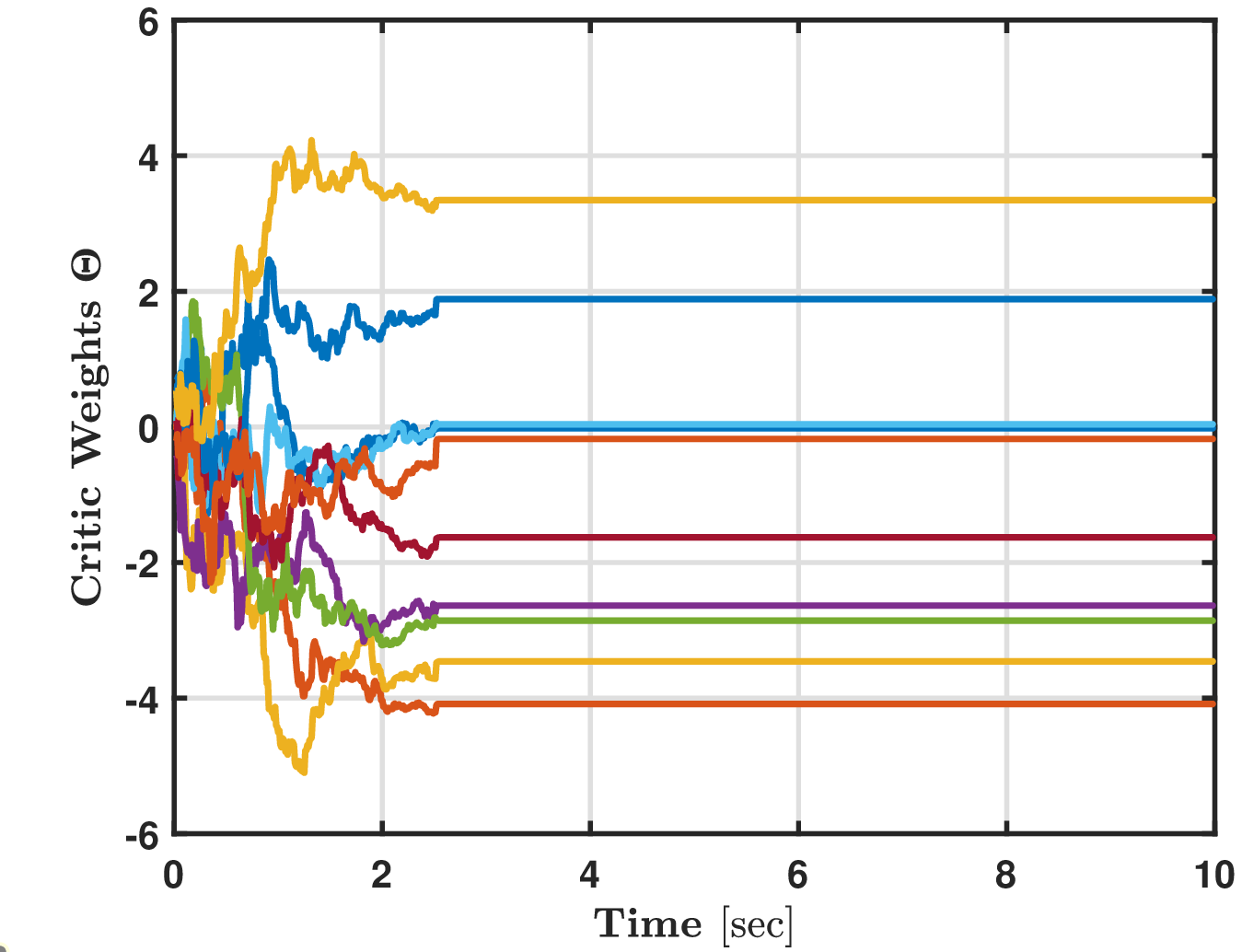}%
		}
		\subcaptionbox{The evolution of the control signal~${\bf u}_k$.%
			\label{fig:con2}}%[12em]%
		{%
			\includegraphics[width=0.33\textwidth]{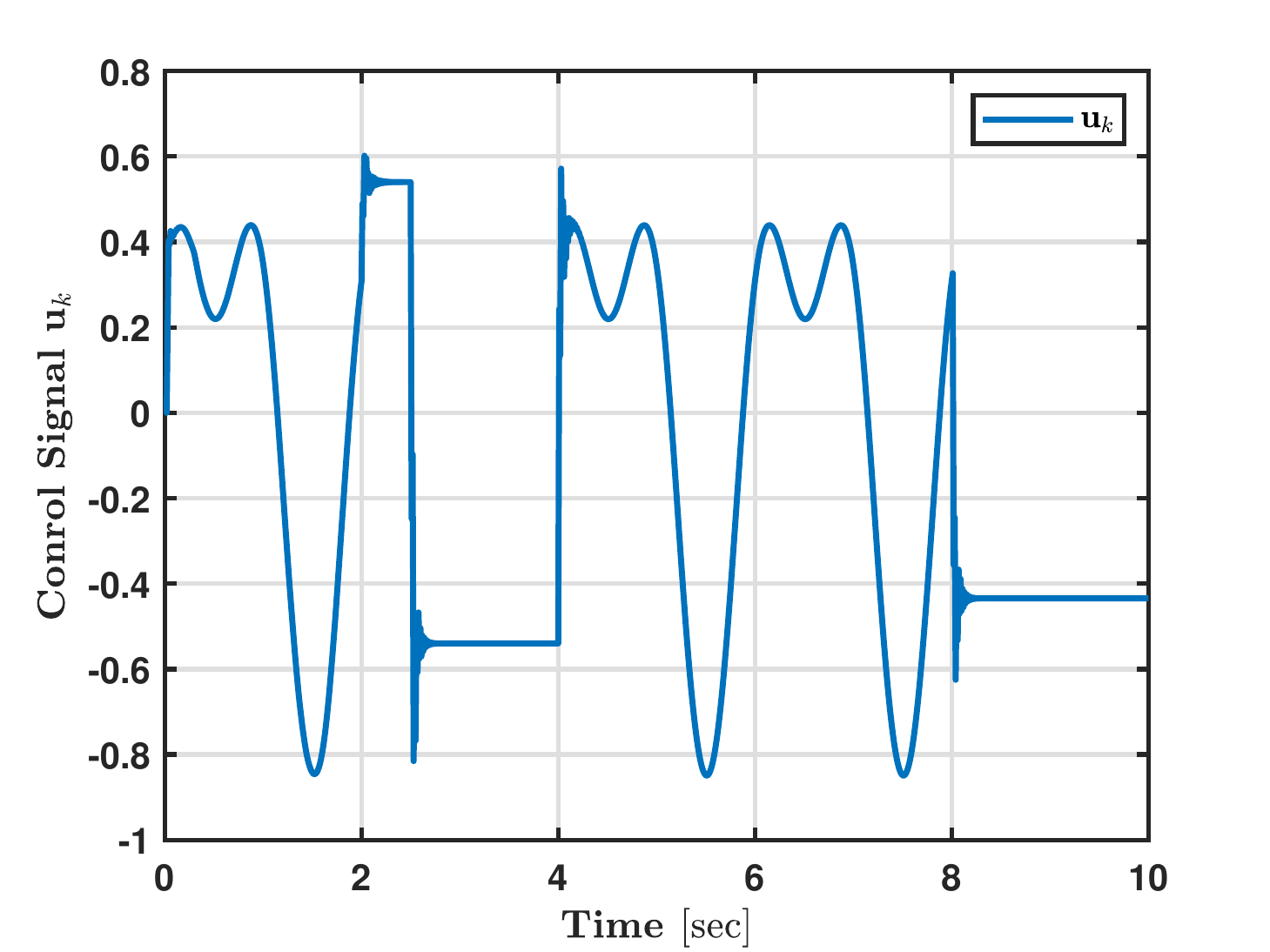}%
		}
		\hfill \hspace{-15pt}
		\\[2ex]
		\mbox{}\hfill
		\subcaptionbox{The evolution of the trajectory-tracking performance.%
			\label{fig:ref2}}%[12em]%
		{%
			\includegraphics[width=0.35\textwidth]{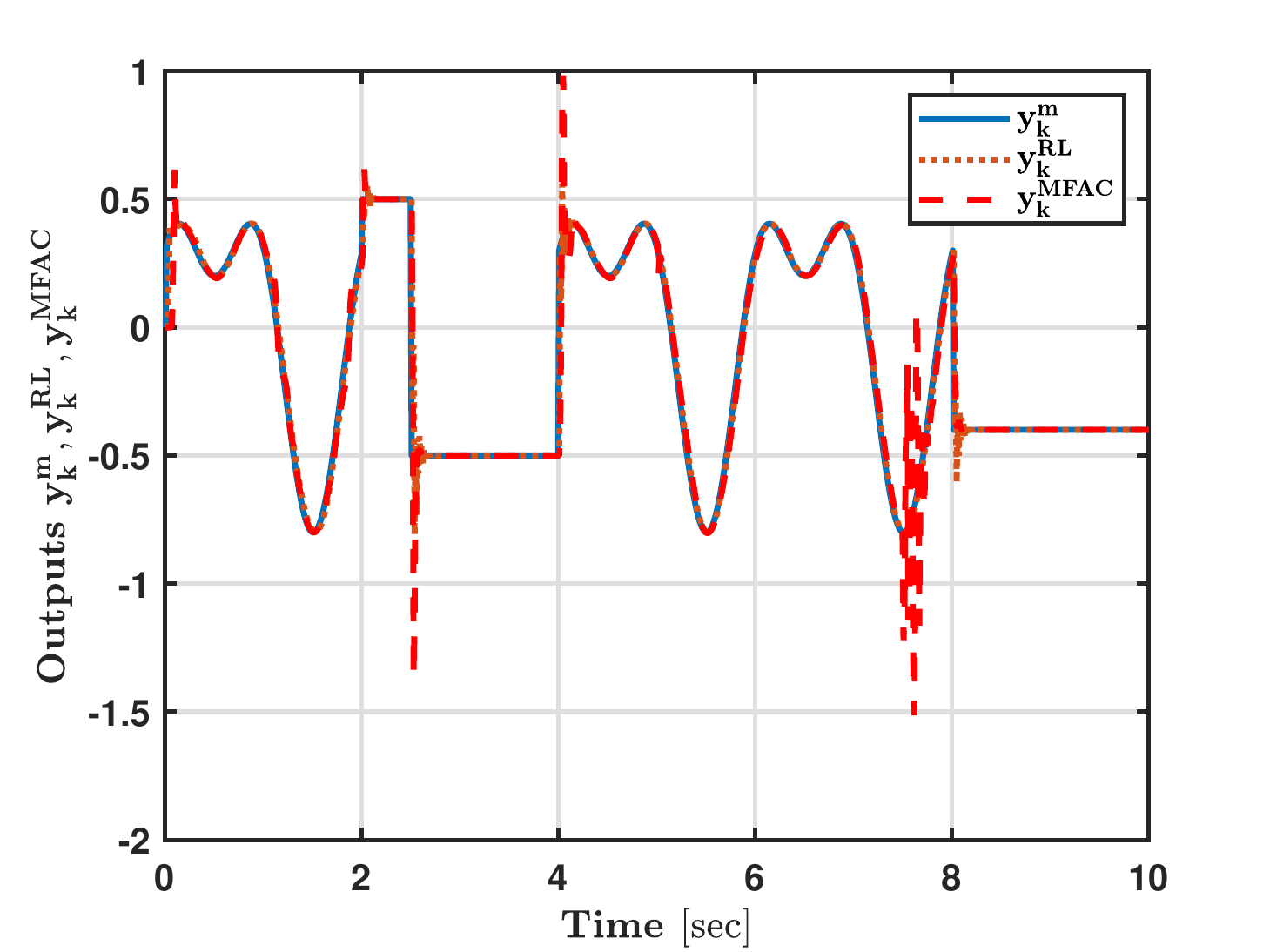}%
		}
		\hfill
		\subcaptionbox{The evolution of the error signal ${\bf \varepsilon}_k$.%
			\label{fig:err2}}%[12em]%
		{%
			\includegraphics[width=0.35\textwidth]{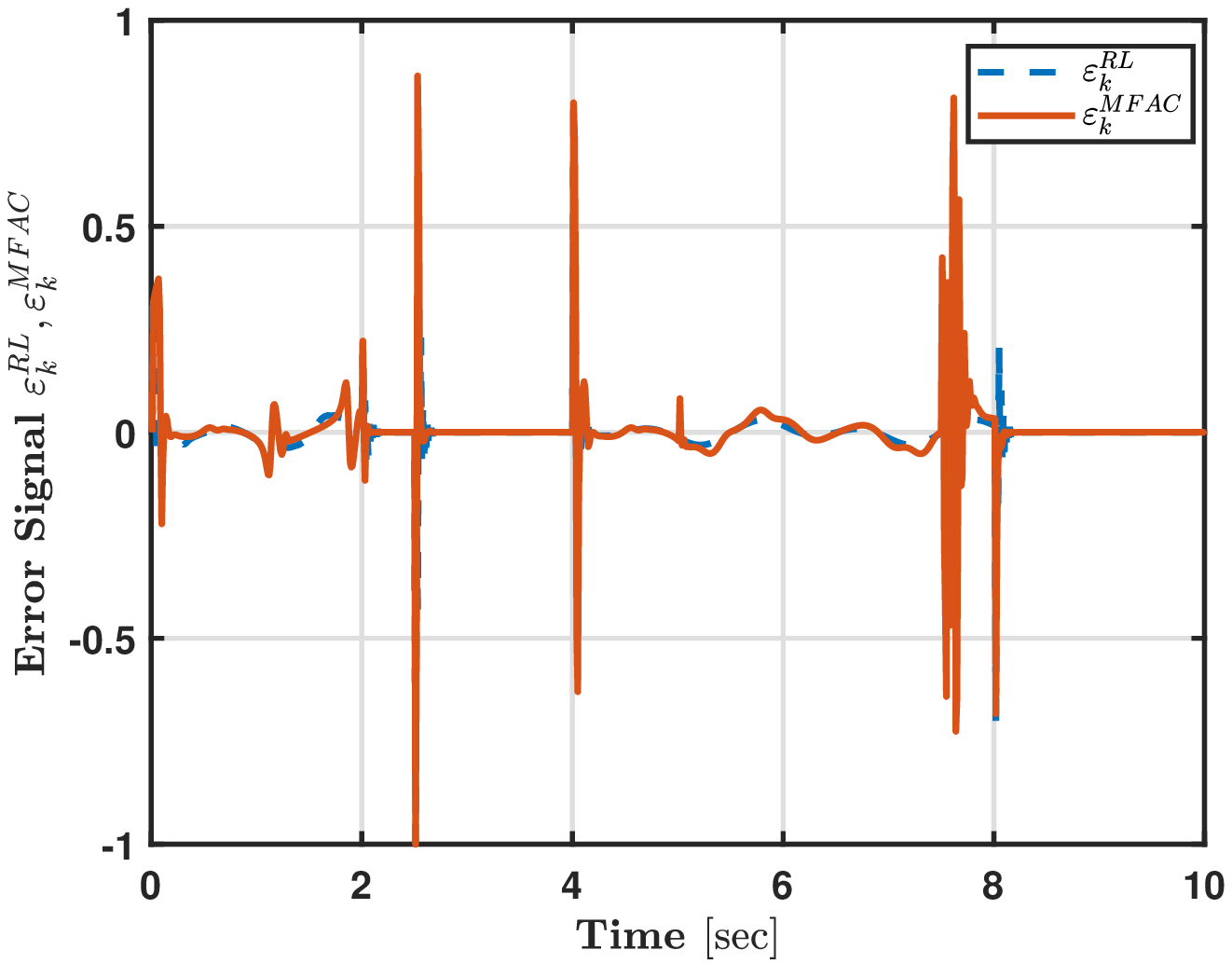}%
		}
		\hfill\mbox{}
		\caption{Simulation results of Case~2.\label{fig:fig2}} 
	\end{figure*}

	%	\begin{figure*}[hbt!]
		%		\centering
		%		% 
		%		\subcaptionbox{Trajectory-tracking perfromance.%
			%			\label{fig:ref2}}%[12em]%
		%		{%
			%			\includegraphics[width=0.33\textwidth]{ref2.pdf}%
			%		}
		%		% 
		%		\hfill \hspace{-15pt}
		%		%
		%		\subcaptionbox{Control signal~${\bf u}_k$%
			%			\label{fig:con2}}%[12em]%
		%		{%
			%			\includegraphics[width=0.33\textwidth]{}%
			%		}
		%		%
		%		\hfill \hspace{-15pt}
		%		%
		%		\subcaptionbox{Error signal ${\bf \varepsilon}_k$%
			%			\label{fig:err2}}%[12em]%
		%		{%
			%			\includegraphics[width=0.33\textwidth]{}%
			%		}
		%		%
		%		\\[2ex]
		%		\mbox{}\hfill
		%		\subcaptionbox{Actor weights~${\bf \Omega}_{k}$%
			%			\label{fig:act2}}%[12em]%
		%		{%
			%			\includegraphics[width=0.4\textwidth]{}%
			%		}
		%		%
		%		\hfill
		%		%
		%		\subcaptionbox{Critic weights~${\bf \bar \Theta}_{k}$%
			%			\label{fig:crt2}}%[12em]%
		%		{%
			%			\includegraphics[width=0.4\textwidth]{}%
			%		}
		%		\hfill\mbox{}
		%		%
		%		\caption{Simulation results for case~2.\label{fig:fig2}} 
		%	\end{figure*}
	%	
	%	

	\section{Conclusion\label{sec:conclus}}
	The work combines an  RL approach with a projection-based adaptation mechanism to solve a model-reference adaptive control problem. This solution uses a moving finite-horizon of model-following error measurements. Further,the structure of the proposed model-following error vector reflects the order of the error dynamics. The adaptive strategy does not employ any explicit dynamic information of the process or the reference model. A PI technique is considered to solve the underlying Bellman equation. Finally, actor-critic approximation structures are designed to implement the PI solution, where the adaptation rules follow a Lagrange-based projection mechanism. Future research will extend the adaptive learning solution to a multi-agent setting.

	\balance

	\bibliographystyle{IEEEtran}
	\bibliography{bib_Neuro_Attit}

% Generated by IEEEtran.bst, version: 1.14 (2015/08/26)
\begin{thebibliography}{10}
\providecommand{\url}[1]{#1}
\csname url@samestyle\endcsname
\providecommand{\newblock}{\relax}
\providecommand{\bibinfo}[2]{#2}
\providecommand{\BIBentrySTDinterwordspacing}{\spaceskip=0pt\relax}
\providecommand{\BIBentryALTinterwordstretchfactor}{4}
\providecommand{\BIBentryALTinterwordspacing}{\spaceskip=\fontdimen2\font plus
\BIBentryALTinterwordstretchfactor\fontdimen3\font minus
  \fontdimen4\font\relax}
\providecommand{\BIBforeignlanguage}[2]{{%
\expandafter\ifx\csname l@#1\endcsname\relax
\typeout{** WARNING: IEEEtran.bst: No hyphenation pattern has been}%
\typeout{** loaded for the language `#1'. Using the pattern for}%
\typeout{** the default language instead.}%
\else
\language=\csname l@#1\endcsname
\fi
#2}}
\providecommand{\BIBdecl}{\relax}
\BIBdecl

\bibitem{Chen2021}
H.~Chen, Y.~Peng, D.~Zhang, S.~Xie, and H.~Yan, ``Dynamic positioning for
  underactuated surface vessel via l1 adaptive backstepping control,''
  \emph{Transactions of the Institute of Measurement and Control}, vol.~43,
  no.~2, pp. 355--370, 2021.

\bibitem{MPC2021}
M.~Allenspach and G.~J.~J. Ducard, ``Nonlinear model predictive control and
  guidance for a propeller-tilting hybrid unmanned air vehicle,''
  \emph{Automatica}, vol. 132, p. 109790, 2021.

\bibitem{Liu2018}
J.~Liu, H.~An, Y.~Gao, C.~Wang, and L.~Wu, ``Adaptive control of hypersonic
  flight vehicles with limited angle-of-attack,'' \emph{IEEE/ASME Transactions
  on Mechatronics}, vol.~23, no.~2, pp. 883--894, 2018.

\bibitem{Hol2002}
W.~Dong, ``On trajectory and force tracking control of constrained mobile
  manipulators with parameter uncertainty,'' \emph{Automatica}, vol.~38, no.~9,
  pp. 1475--1484, 2002.

\bibitem{Kam1}
I.~Kaminer, A.~Pascoal, E.~Hallberg, and C.~Silvestre, ``Trajectory tracking
  for autonomous vehicles: An integrated approach to guidance and control,''
  \emph{Journal of Guidance, Control, and Dynamics}, vol.~21, no.~1, pp.
  29--38, 1998.

\bibitem{Byrne1995}
R.~Byrne and C.~Abdallah, ``Design of a model reference adaptive controller for
  vehicle road following,'' \emph{Mathematical and Computer Modelling},
  vol.~22, no.~4, pp. 343--354, 1995.

\bibitem{Lewis2012}
F.~L. Lewis, D.~Vrabie, and V.~L. Syrmos, \emph{Optimal Control}.\hskip 1em
  plus 0.5em minus 0.4em\relax John Wiley \& Sons, 2012.

\bibitem{aastrom2013adaptive}
K.~J. {\AA}str{\"o}m and B.~Wittenmark, \emph{Adaptive Control}.\hskip 1em plus
  0.5em minus 0.4em\relax Courier Corporation, 2013.

\bibitem{DMMPC2021}
M.~Bagherzadeh, S.~Savehshemshaki, and W.~Lucia, ``Guaranteed collision-free
  reference tracking in constrained multi unmanned vehicle systems,''
  \emph{IEEE Transactions on Automatic Control}, pp. 1--1, 2021.

\bibitem{MPC2014}
M.-C. Pai, ``Discrete-time sliding mode control for robust tracking and model
  following of systems with state and input delays,'' \emph{Nonlinear
  Dynamics}, vol.~76, no.~3, pp. 1769--1779, 2014.

\bibitem{AUV1}
R.~Cristi, F.~Papoulias, and A.~Healey, ``Adaptive sliding mode control of
  autonomous underwater vehicles in the dive plane,'' \emph{IEEE Journal of
  Oceanic Engineering}, vol.~15, no.~3, pp. 152--160, 1990.

\bibitem{Robust2019}
C.~Wu, A.~{van der Schaft}, and J.~Chen, ``Robust trajectory tracking for
  incrementally passive nonlinear systems,'' \emph{Automatica}, vol. 107, pp.
  595--599, 2019.

\bibitem{Moore2014}
J.~Moore and R.~Tedrake, ``Adaptive control design for underactuated systems
  using sums-of-squares optimization,'' in \emph{2014 American Control
  Conference}, 2014, pp. 721--728.

\bibitem{SHI2017}
Z.~Shi and L.~Zhao, ``Robust model reference adaptive control based on linear
  matrix inequality,'' \emph{Aerospace Science and Technology}, vol.~66, pp.
  152--159, 2017.

\bibitem{netw1}
K.~Schenk, M.~Wissing, and J.~Lunze, ``Trajectory tracking in networks of
  linear systems,'' \emph{Automatica}, vol. 123, p. 109326, 2021.

\bibitem{RMPC2021}
X.~Liu, L.~Ma, X.~Kong, and K.~Y. Lee, ``Robust model predictive iterative
  learning control for iteration-varying-reference batch processes,''
  \emph{IEEE Transactions on Systems, Man, and Cybernetics: Systems}, vol.~51,
  no.~7, pp. 4238--4250, 2021.

\bibitem{AbouheafCTT2015}
M.~Abouheaf, F.~Lewis, M.~Mahmoud, and D.~Mikulski, ``Discrete-time dynamic
  graphical games: Model-free reinforcement learning solution,'' \emph{Control
  Theory and Technology}, vol.~13, no.~1, pp. 55--69, 2015.

\bibitem{AbouheafICRA19}
M.~Abouheaf and W.~Gueaieb, ``Multi-agent synchronization using online
  model-free action dependent dual heuristic dynamic programming approach,'' in
  \emph{2019 International Conference on Robotics and Automation (ICRA)}, 2019,
  pp. 2195--2201.

\bibitem{AbouheafAuto14}
M.~I. Abouheaf, F.~L. Lewis, K.~G. Vamvoudakis, S.~Haesaert, and R.~Babuska,
  ``Multi-agent discrete-time graphical games and reinforcement learning
  solutions,'' \emph{Automatica}, vol.~50, no.~12, pp. 3038--3053, 2014.

\bibitem{MFAC2021}
J.~Xu, N.~Lin, and R.~Chi, ``Improved high-order model free adaptive control,''
  in \emph{2021 IEEE 10th Data Driven Control and Learning Systems Conference
  (DDCLS)}, 2021, pp. 704--708.

\bibitem{Bahare14}
B.~Kiumarsi, F.~L. Lewis, H.~Modares, A.~Karimpour, and M.-B. Naghibi-Sistani,
  ``Reinforcement q-learning for optimal tracking control of linear
  discrete-time systems with unknown dynamics,'' \emph{Automatica}, vol.~50,
  no.~4, pp. 1167--1175, 2014.

\bibitem{AbouheafTrans20}
M.~Abouheaf, N.~Q. Mailhot, W.~Gueaieb, and D.~Spinello, ``Guidance mechanism
  for flexible-wing aircraft using measurement-interfaced machine-learning
  platform,'' \emph{IEEE Transactions on Instrumentation and Measurement},
  vol.~69, no.~7, pp. 4637--4648, 2020.

\bibitem{sut92}
R.~S. Sutton, A.~G. Barto, and R.~J. Williams, ``Reinforcement learning is
  direct adaptive optimal control,'' \emph{IEEE Control Systems Magazine},
  vol.~12, no.~2, pp. 19--22, 1992.

\bibitem{Sutton_1998}
R.~S. Sutton and A.~G. Barto, \emph{Reinforcement Learning: An Introduction},
  2nd~ed., ser. Second.\hskip 1em plus 0.5em minus 0.4em\relax Massachusetts:
  MIT Press, 1998.

\bibitem{Bertsekas1996}
D.~Bertsekas and J.~Tsitsiklis, \emph{Neuro-Dynamic Programming}, 1st~ed.\hskip
  1em plus 0.5em minus 0.4em\relax Massachusetts: Athena Scientific, 1996.

\bibitem{AbouheafACC19}
M.~I. Abouheaf, M.~S. Mahmoud, and F.~L. Lewis, ``Policy iteration solution for
  differential games with constrained control policies,'' in \emph{2019
  American Control Conference (ACC)}, 2019, pp. 4301--4306.

\bibitem{Busoniu2010}
L.~Buşoniu, D.~Ernst, B.~De~Schutter, and R.~Babuška, ``Online least-squares
  policy iteration for reinforcement learning control,'' in \emph{Proceedings
  of the 2010 American Control Conference}, 2010, pp. 486--491.

\bibitem{Srivastava2019}
R.~Srivastava, R.~Lima, K.~Das, and A.~Maity, ``Least square policy iteration
  for ibvs based dynamic target tracking,'' in \emph{2019 International
  Conference on Unmanned Aircraft Systems (ICUAS)}, 2019, pp. 1089--1098.

\bibitem{Bah2017}
B.~Kiumarsi, F.~L. Lewis, and Z.-P. Jiang, ``$h_\infty$ control of linear
  discrete-time systems: Off-policy reinforcement learning,''
  \emph{Automatica}, vol.~78, pp. 144--152, 2017.

\bibitem{Lewis20}
Y.~Jiang, J.~Fan, W.~Gao, T.~Chai, and F.~L. Lewis, ``Cooperative adaptive
  optimal output regulation of nonlinear discrete-time multi-agent systems,''
  \emph{Automatica}, vol. 121, p. 109149, 2020.

\bibitem{Sutton2008}
S.~Bhatnagar, R.~S. Sutton, M.~Ghavamzadeh, and M.~Lee, ``Natural
  actor–critic algorithms,'' \emph{Automatica}, vol.~45, no.~11, pp.
  2471--2482, 2009.

\bibitem{CrtBahare}
B.~Kiumarsi and F.~L. Lewis, ``Actor–critic-based optimal tracking for
  partially unknown nonlinear discrete-time systems,'' \emph{IEEE Transactions
  on Neural Networks and Learning Systems}, vol.~26, no.~1, pp. 140--151, 2015.

\bibitem{CrtLewis}
R.~Song, F.~Lewis, Q.~Wei, H.-G. Zhang, Z.-P. Jiang, and D.~Levine, ``Multiple
  actor-critic structures for continuous-time optimal control using
  input-output data,'' \emph{IEEE Transactions on Neural Networks and Learning
  Systems}, vol.~26, no.~4, pp. 851--865, 2015.

\bibitem{CrtZhao}
X.~Zhao, S.~Han, B.~Tao, Z.-P. Yin, and H.~Ding, ``Model-based actor-critic
  learning of robotic impedance control in complex interactive environment,''
  \emph{IEEE Transactions on Industrial Electronics}, pp. 1--1, 2021.

\bibitem{kyr1}
K.~G. Vamvoudakis and F.~L. Lewis, ``Online actor–critic algorithm to solve
  the continuous-time infinite horizon optimal control problem,''
  \emph{Automatica}, vol.~46, no.~5, pp. 878--888, 2010.

\bibitem{morad2}
M.~Abu-Khalaf and F.~L. Lewis, ``Nearly optimal control laws for nonlinear
  systems with saturating actuators using a neural network hjb approach,''
  \emph{Automatica}, vol.~41, no.~5, pp. 779--791, 2005.

\bibitem{Derong2014}
D.~Liu and Q.~Wei, ``Policy iteration adaptive dynamic programming algorithm
  for discrete-time nonlinear systems,'' \emph{IEEE Transactions on Neural
  Networks and Learning Systems}, vol.~25, no.~3, pp. 621--634, 2014.

\bibitem{astrom2013}
K.~J. {\AA}str{\"o}m and B.~Wittenmark, \emph{Adaptive control}.\hskip 1em plus
  0.5em minus 0.4em\relax Courier Corporation, 2013.

\bibitem{GH90}
\BIBentryALTinterwordspacing
T.~H. Hopp and W.~E. Schmitendorf, ``{Design of a Linear Controller for Robust
  Tracking and Model Following},'' \emph{Journal of Dynamic Systems,
  Measurement, and Control}, vol. 112, no.~4, pp. 552--558, 12 1990. [Online].
  Available: \url{https://doi.org/10.1115/1.2896179}
\BIBentrySTDinterwordspacing

\end{thebibliography}
	% name your BibTeX data base
\end{document}